%% file: main.tex
\documentclass[sigconf,nonacm]{aamas}
\usepackage{balance}

\makeatletter

\if@ACM@newfonts\else
  \RequirePackage{lmodern}
  \RequirePackage{amssymb}
\fi
\gdef\@copyrightpermission{
  \begin{minipage}{0.2\columnwidth}
   \href{https://creativecommons.org/licenses/by/4.0/}{\includegraphics[width=0.90\textwidth]{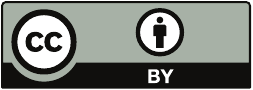}}
  \end{minipage}\hfill
  \begin{minipage}{0.8\columnwidth}
   \href{https://creativecommons.org/licenses/by/4.0/}{This work is licensed under a Creative Commons Attribution International 4.0 License.}
  \end{minipage}
  \vspace{5pt}
}
\makeatother

\usepackage{mathtools}
\usepackage{booktabs,tabularx,multirow}
\usepackage{tikz}
\usetikzlibrary{arrows.meta,positioning,fit,calc}
\usepackage{enumitem}
\usepackage{xspace}
\usepackage{microtype}

\newtheorem{definition}{Definition}
\newtheorem{theorem}{Theorem}
\newtheorem{proposition}{Proposition}
\newtheorem{corollary}{Corollary}
\newtheorem{lemma}{Lemma}
\newtheorem{assumption}{Assumption}
\newtheorem{example}{Example}

\newcommand{\RRI}{\textsc{RRI}\xspace}
\newcommand{\CRI}{\textsc{CRI}\xspace}
\newcommand{\LNI}{\textsc{LNI}\xspace}
\newcommand{\FLR}{\textsc{FLR}\xspace}
\newcommand{\QFLR}{\textsc{qFLR}\xspace}
\newcommand{\dfquad}{DF-QuAD\xspace}
\newcommand{\lqdf}{LQ-DF-QuAD\xspace}
\newcommand{\profiles}{\mathcal{P}}

\newcommand{\uR}{u_{R}}
\newcommand{\repr}{\varphi}
\newcommand{\ql}{q_{\!L}}
\newcommand{\qc}{q_{\!C}}
\newcommand{\qk}{q_{\!K}}
\newcommand{\rootagg}{\oplus}
\newcommand{\multisetplus}{\mathbin{\uplus}}
\newcommand{\score}{S}
\newcommand{\decision}{D}
\newcommand{\equivL}{\equiv_{R}}
\newcommand{\approxL}{\simeq_{R}}

\title{Replication or Corroboration? An Axiomatic Framework for Evidential Dependence in Multi-Agent Debate}

\author{Hoang-Loc Cao}
\authornote{These authors contributed equally to this work.}
\affiliation{%
  \institution{University of New Brunswick}
  \country{Canada}
}
\email{chloc22@clc.fitus.edu.vn}

\author{Truong Thanh Hung Nguyen}
\authornotemark[1]
\affiliation{%
  \institution{University of New Brunswick}
  \country{Canada}
}
\email{hung.ntt@unb.ca}

\author{Phuc Truong Loc Nguyen}
\affiliation{%
  \institution{FAU Erlangen-N\"urnberg}
  \country{Germany}
}
\email{loc.pt.nguyen@fau.de}

\author{Vo Thanh Khang Nguyen}
\affiliation{%
  \institution{FPT Software}
  \country{Vietnam}
}
\email{khangnvt1@fpt.com}

\author{Phuc Ho}
\affiliation{%
  \institution{University of New Brunswick}
  \country{Canada}
}
\email{hdphuc22@clc.fitus.edu.vn}
\author{Hung Nguyen}
\affiliation{%
  \institution{Vietnamese-German University}
  \country{Vietnam}
}
\email{102240004@student.vgu.edu.vn}
\author{Duc Pham}
\affiliation{%
  \institution{HCM University of Technology}
  \country{Vietnam}
}
\email{duc.pham0609@hcmut.edu.vn}

\author{Long Tran-Thanh}
\affiliation{%
  \institution{University of Warwick}
  \country{United Kingdom}
}
\email{long.tran-thanh@warwick.ac.uk}

\author{Hung Cao}
\affiliation{%
  \institution{University of New Brunswick}
  \country{Canada}
}
\email{hung.cao@unb.ca}

\renewcommand{\shortauthors}{Cao et al.}

\begin{document}

\begin{abstract}
In multi-agent debates, matching reports can arise for two different reasons: agents may repeat the same evidence (replication), or they may provide separate evidential roots that support the same claim (corroboration). A judge must distinguish these two cases. Counting report occurrences gives extra weight to relays, paraphrases, alias resubmissions, and excerpt splits, while merging all matching content avoids this only by also removing genuine corroboration. We formalize this tension as a minimal judgment game in which acquiring new evidence is costly, but replication is cheap. We show that \emph{root replication invariance} (\RRI) is exactly threshold-uniform replication immunity. Under \RRI, replication-only deviations cannot change the verdict, and positive-cost replication is strictly dominated, whereas occurrence-sensitive and content-keyed judges admit profitable deviations. We also establish an information boundary: if a judge's observation interface cannot distinguish a replica from a matched fresh root, no judge can both ignore the replica and respond strictly to the fresh root. To preserve this distinction, we introduce the \emph{lineage quotient}, which removes wording, identity, occurrence multiplicity, and arbitrary root names while keeping distinct roots separate. On orbit-connected domains, it is the coarsest representation sufficient for every deterministic judge invariant to replication and root-label renaming. We further lift this construction to quantitative bipolar argumentation frameworks (QBAFs), characterize strict fresh-root response, prove its propagation along sign-consistent paths under DF-QuAD, and derive a tight closed-form response modulus. Exhaustive audits over 28,224 direct-star profiles, 2,052 multi-layer DAGs, with matched interventions on 277 AVeriTeC-derived profiles, show the predicted separation between replication and corroboration under a supplied partition.

\end{abstract}

\keywords{multi-agent debate, computational argumentation,
evidential dependence, replication invariance,
quantitative argumentation}
\ccsdesc[500]{Computing methodologies~Knowledge representation and reasoning}
\ccsdesc[500]{Computing methodologies~Multi-agent systems}

\maketitle
\pagestyle{plain}
\thispagestyle{plain}

\section{Introduction}
A multi-agent debate protocol, whether through an external judge or the agents' own aggregation step, must decide how much weight to give the reports it receives \cite{irving2018debate,du2024debate,chan2024chateval}. The judge sees who submitted each report, what it says, and which claim it supports or attacks, but these observations do not reveal whether reports that agree add distinct evidence. Source-dependence and truth-discovery work has long observed that apparent corroboration can arise from copying \cite{dong2009dependence,li2016survey}, and jury-theorem and opinion-pooling analyses make the probabilistic version of the same point \cite{ladha1992condorcet,dietrich2004jury,dietrich2013independent,genest1986combining}. Thus, report counts and textual agreement are an incomplete basis for judgment, and in domains that rely on source integrity, such as automated fact-checking and claim verification, confusing replication with corroboration can inflate confidence in unverified or false claims.
Consider three agents who submit the same supporting claim. They may have found three independent records, or they may be passing along a single record that has been relayed, paraphrased, excerpted from another passage of the same document, or resubmitted under a second displayed identity. We call the underlying piece of evidence, such as one record or witness statement, an \textit{evidential root} (\textit{root} for short), and each copy, paraphrase, excerpt, or resubmission of it a \textit{rendering} of that root. A root is \textit{fresh} if no report submitted so far renders it: each independent record is a fresh root when it first arrives, whereas relaying a record that has already been submitted adds a rendering but no fresh root (formalized in Sec.~\ref{sec:units}).

\begin{figure*}[ht]
    \centering
    \input{fig/fig_overview} 
    \caption{\textbf{Method overview.} (a)~Replication and corroboration become indistinguishable once root labels are dropped (Thm.~\ref{thm:information}). (b)~Only the lineage quotient $q_L$ ignores replicas and excerpt splits yet counts fresh roots, and it is the coarsest such key (Thms.~\ref{thm:factor}--\ref{thm:minimal}). (c)~Lifting $q_L$ to QBAFs preserves these guarantees for any semantics (Thm.~\ref{thm:lifting}), making replication strictly dominated while fresh roots still propagate. Color denotes evidential root.}
    \label{fig:overview}
\end{figure*}

\paragraph{Why Replication is a Multi-Agent Problem.}
In a debate, duplication is a strategic choice available to agents. An agent can relay a peer's evidence, paraphrase it, quote a different passage of the same document, or resubmit it under a second identity without acquiring anything new. The judge's representation therefore determines whether agents can gain influence simply by repackaging existing evidence. Sec.~\ref{sec:model} defines this dilemma precisely through a minimal judgment game in which acquiring a fresh evidential root costs strictly more than any replication action. In this game, neutrality to further renderings of a represented root is \emph{equivalent} to immunity of the threshold verdict to the entire replication class at every threshold (Prop.~\ref{prop:repproof}). Consequently, no coalition profits from a replication-only deviation, and positive-cost replication is strictly dominated (Cor.~\ref{cor:repproof}). We call a judge with this property \emph{coalition replication-proof}, since repeating or re-rendering represented evidence cannot change the outcome, regardless of which coalition performs the replication or where the additional reports are inserted. Yet collapsing matching content can discard genuine corroboration, undermining a central benefit of debate. The representation must therefore remove the incentive to replicate while preserving the value of acquiring distinct evidence.

Now, a judge that aggregates every occurrence gives relays and copies the same weight as independent evidence. On the other hand, a judge that merges every matching proposition avoids that error, but then prevents reports from distinct roots from reinforcing one another whenever their content happens to agree. Now, consider a third response option, which also fails, but less visibly, and motivates the key idea used in our paper. A judge that merges reports agreeing in both content and root ignores verbatim copies, yet still rewards an agent who submits two excerpts of one document, because the excerpts differ in content (Ex.~\ref{ex:deviation}). The distinction must therefore be drawn at the level of \emph{roots}. Further renderings of a represented root should leave the score unchanged, while an eligible contribution from a distinct root should be able to strengthen support or attack. Fig.~\ref{fig:overview}a) illustrates this distinction. It leads to the central question of the paper: \emph{What must a judge retain to separate replication from genuine corroboration?}

\paragraph{Trust Boundary}
To answer this question, we take lineage as provided, with the judge receiving a partition of report occurrences into represented evidential roots. The judge does not infer this partition, and distinct roots do not certify causal or statistical independence, since dependence inference and identity authentication remain upstream tasks \cite{dong2009dependence,li2016survey}. All renderings of a single root receive a single label, so an agent cannot increase a root's aggregation weight by fragmenting its content. If one document carries several separable contributions, the upstream model may record it as several roots, and that decision is explicit and open to audit. An agent cannot force the same split by choosing what to excerpt.
Our guarantees therefore concern aggregation given supplied dependence information, not provenance discovery or fact-checking accuracy.

Within this trust boundary, we introduce three principles that can lead to the required behavior: 

(1) \emph{Root Replication Invariance} (\RRI): This requires that a further rendering of a represented root, no stronger than what that root has already contributed, leaves the score unchanged. Restricting \RRI to repetitions with identical content yields the weaker property, which we call \textit{Content-keyed Replication Invariance} (\CRI). In our work, two reports count as such a repetition when they agree in content, root, argumentative role, and strength. 

(2) \emph{Fresh-Lineage Responsiveness} (\FLR): This requires that an eligible new supporter or attacker moves the score in the correct direction, and its quantitative form \QFLR sets a lower bound on how far the score moves. 

(3) \emph{Lineage-Name Invariance} (\LNI): This requires that the spelling of root label carries no information, which still allows provenance-aware weighting by source, provided that source attributes are represented as explicit occurrence fields.

These principles raise a natural chain of questions: what must a judge observe, which representation retains exactly that information, and how far does a fresh root move the score?. Our contributions follow that chain and are summarized as follows:

\textbf{1. Replication Immunity and Information Requirements} (Sec.~\ref{sec:model}--\ref{sec:principles}): We introduce a minimal judgment game and show that \RRI is equivalent to threshold-uniform replication immunity (Prop.~\ref{prop:repproof}). Under \RRI, replication-only deviations cannot change the verdict, and positive-cost replication is strictly dominated (Cor.~\ref{cor:repproof}), while occurrence aggregation, plurality voting, and a content-keyed quotient each admit a strictly profitable deviation (Ex.~\ref{ex:deviation}). We also establish an information-necessity result showing that no observation interface can ignore a replica while responding strictly to a fresh root if it represents the two identically (Thm.~\ref{thm:information}). This limitation applies to any observation interface, including a natural-language prompt given to a large language model (LLM).

\textbf{2. A Dependence-Aware Representation} (Sec.~\ref{sec:repr}--\ref{sec:qbaf}): We introduce the lineage quotient, a representation that merges renderings of the same root while keeping distinct roots separate, and that removes wording, identity, and occurrence multiplicity. On a fixed orbit-connected domain, a deterministic judge satisfies \RRI and \LNI exactly when its score depends on the profile only through this quotient (Thm.~\ref{thm:factor}). Any representation supporting the full class of such judges must therefore retain at least the information encoded by the lineage quotient (Thm.~\ref{thm:minimal}), making it the coarsest representation compatible with these requirements. We also show that content-based alternatives either remain vulnerable to replication or suppress genuine corroboration (Prop.~\ref{prop:contentcollapse}).

% (3) \textbf{Quantitative Response Guarantees}  (Sec.~\ref{sec:qbaf}): We characterize fresh-root response for signed-aggregative semantics, where response in the direct-leaf setting holds exactly when the relevant aggregate strictly increases (Thm.~\ref{thm:characterization}). Under \dfquad, this response propagates strictly to the topic whenever all directed paths from the leaf to the topic have the same attack-edge parity (Thm.~\ref{thm:propagate}), and a diamond with opposite parities demonstrates why this condition cannot be dropped. We also derive a tight closed-form modulus for \dfquad (Thm.~\ref{thm:modulus}), measuring how much corroboration can still move the score as the aggregate approaches saturation. Exhaustive structural audits (Sec.~\ref{sec:audits}) over 28,224 direct-star profiles and 2,052 multi-layer DAGs, together with a controlled audit on 277 AVeriTeC-derived profiles \cite{schlichtkrull2023averitec}, confirm the predicted separation between replication and fresh-root corroboration under a supplied partition.

\textbf{3. Quantitative Response Guarantees} (Sec.~\ref{sec:qbaf}): We characterize fresh-root response for signed-aggregative semantics, showing that in a direct-leaf setting, the response holds exactly when the relevant aggregate strictly increases (Thm.~\ref{thm:characterization}). Under \dfquad, we prove that this strict response propagates to the topic whenever all directed paths from the leaf to the topic share the same attack-edge parity (Thm.~\ref{thm:propagate}). Finally, we derive a tight closed-form modulus for \dfquad (Thm.~\ref{thm:modulus}), quantifying how much corroboration can still move the score as the aggregate approaches saturation.

\textbf{4. A Replication--Corroboration Stress Test} (Sec.~\ref{sec:audits}): We operationalize our theory through a matched-intervention audit that tests whether a judge ignores additional renderings of represented evidence while remaining responsive to genuinely fresh evidence. Exhaustive structural audits over 28,224 direct-star profiles and 2,052 multi-layer DAGs, alongside controlled interventions on 277 AVeriTeC-derived profiles \cite{schlichtkrull2023averitec}, expose the complementary failures of occurrence-sensitive and content-collapsing baselines. Crucially, these audits confirm the predicted empirical separation achieved by the lineage quotient.

\section{Related Work}
\label{sec:related}

\paragraph{Multi-Agent Debate and Post-hoc Judgment}
Multi-agent debate has been studied for scalable oversight and collaborative reasoning \cite{irving2018debate,khan2024persuasive,chan2024chateval,du2024debate,chen2024reconcile,liang2024divergent}, and its effectiveness depends on the protocol, agent diversity, and confidence handling \cite{smit2024mad,zhu2026demystifying}. LLM judges show systematic biases \cite{wang2024fair} and do not always capture non-linear argumentation structures \cite{sanayei2025judge}. The closest work is Yin \textit{et al.}~\cite{yin2026posthoc}, whose judge receives an \emph{agent-indexed opinion profile} that links each opinion to its agent but represents no relations between reports. By Thm.~\ref{thm:information}, such a judge can discount a repeated report only by speaker, not by evidential root, so an agent can inflate a report's weight at no cost by adding identities (action O4 in Sec.~\ref{sec:game}). Their setup models neither report replication nor false identities. Claim-verification systems also build claim--evidence graphs \cite{freedman2025argllm}. We ask how such graphs should represent reports that may share an evidential root.

\paragraph{QBAF Semantics and Duplicate Handling}
Quantitative argumentation extends graph-based reasoning \cite{dung1995acceptability,atkinson2017towards,cyras2021contestable} with graded strengths and signed relations \cite{baroni2015automatic,amgoud2018weighted,potyka2018continuous}, and has been applied to truth discovery \cite{potyka2024truth}. Gradual semantics such as \dfquad~\cite{rago2016dfquad} have been studied through their principles \cite{baroni2018properties,baroni2019principles} and through strength changes that explain outcome changes \cite{kampik2026strength}. Related notions include duplicate invariance \cite{gaggl2015reasonable}, aggregative semantics in which repeated information adds nothing while additional information reinforces \cite{munro2026aggregative}, and independence of clones in social choice \cite{tideman1987clones,elkind2012clones}. These semantics assume the graph is given, and duplicate invariance and clone independence suppress every near-copy. We instead build the graph from a supplied partition keyed by evidential root rather than report text, so renderings of one root add nothing while matching reports from distinct roots may still reinforce one another.

\paragraph{Provenance and Dependent Evidence}
Truth discovery estimates source trust, copying, and dependence \cite{yin2008truthfinder,dong2009dependence,li2016survey}, and trust- and provenance-aware argumentation uses source or derivation information to evaluate arguments \cite{koster2014trust,toniolo2014provenance}. Under dependent testimony, Condorcet jury results can weaken or reverse \cite{ladha1992condorcet,berend2007monotonicity}, and judgment aggregation \cite{dietrich2004jury,dietrich2013independent} and opinion pooling \cite{degroot1974consensus,genest1986combining} also treat shared evidence or dependent sources. These works identify or use dependence, or model how it changes the benefit of aggregation under a given probabilistic model. We instead ask which distinctions a representation must retain when dependence is given as a partition, and leave partial or graded dependence to Sec.~\ref{sec:discussion}.

\paragraph{Strategic Multiplicity}
Mechanism design for argumentation and group evaluation studies strategic reporting \cite{rahwan2008mechanism,awad2017group}, and false-name models consider agents with multiple identities \cite{todo2011falsename,brill2016falsename}. Our guarantee is narrower than false-name-proofness, covering only replication of already represented evidence under authenticated root labels, not forged labels or modified content.

\section{Model and the Judgment Game}
\label{sec:model}

\subsection{Occurrences, Roots, and Units}
\label{sec:units}
Let $c$ be the topic claim, $\mathcal C$ a finite set of structural claim nodes with $c\in\mathcal C$, $N$ a finite set of agents, and $H_i$ the displayed identities controlled by agent $i$. Let $\mathcal K$ be normalized propositions and $\mathcal L$ lineage labels, each naming one represented evidential root.

\begin{definition}[Report occurrence]
\label{def:occurrence}
An occurrence is $a=(h,x,\allowbreak\kappa,\lambda,p,s,\tau)$, where $h$ is a displayed identity, $x$ is surface text, $\kappa\in\mathcal K$ is normalized propositional content, $\lambda\in\mathcal L$ is a root label, $p\in\mathcal C$ is a targeted claim, $s\in\{+,-\}$ is the relation to $p$, and $\tau\in[0,1]$ is a strength. The controller of $h$ is not observed by the judge. A profile $P\in\profiles$ is a finite multiset of occurrences.
\end{definition}

Thus, paraphrases may have different surface texts \(x\) but the same \(\kappa\), whereas reports with the same \(\kappa\) may have either the same or distinct evidential roots \(\lambda\). Accordingly, the aggregation unit is the root together with the role it is used in, not the occurrence or its content.

\begin{definition}[Root--role unit, root strength, and fresh root]
\label{def:unit}
$\uR(a)=(\lambda(a),p(a),s(a))$. Write $\uR(P)$ for the units represented in $P$, and $a\equivL b$ iff $\uR(a)=\uR(b)$. The root strength of $\upsilon\in\uR(P)$ is $\rho_P(\upsilon)=\rootagg\{\tau(a)\mid a\in P,\uR(a)=\upsilon\}$, and $\Lambda(P)=\{\lambda(a)\mid a\in P\}$ is the set of roots \emph{represented} in $P$. A root $\ell\in\mathcal L$ is \emph{fresh for} $P$ if $\ell\notin\Lambda(P)$, and an occurrence $b$ is fresh for $P$ if $\lambda(b)$ is.
\end{definition}

Neutrality to additional renderings requires absorption,
$t\le\rootagg V\Rightarrow\rootagg(V\multisetplus\{t\})=\rootagg V$.
Together with $\rootagg\{t\}=t$, this uniquely yields
$\rootagg=\max$, which we use throughout.

\begin{assumption}[Root integrity]
\label{as:root}
One root label is assigned to every occurrence derived from one represented root, covering copies, paraphrases, alias resubmissions, and distinct excerpts of that root. Distinct labels denote roots that the upstream model represents as distinct. No agent action changes an occurrence's label.
\end{assumption}
\begin{assumption}[Occurrence strength]
\label{as:strength}
$\tau(a)\in[0,1]$ is a function of normalized propositional content, role, and explicitly represented source attributes. It does not depend on surface text, identity, root label, or occurrence count. The root strength $\rho_P=\max$ is bounded by the root ceiling $\rho^\ast(\upsilon)=\sup\{\tau(a):\uR(a)=\upsilon\}$.
\end{assumption}

\begin{assumption}[Role congruence]
\label{as:role}
Role is part of the unit key, so profile-level congruence is automatic; Sec.~\ref{sec:qbaf} states the graph-level form for general QBAFs.
\end{assumption}

\subsection{Agent Actions and Minimal Judgment Game}
\label{sec:game}
At each step an agent $i\in N$ either abstains, \textbf{(O0)}, or appends one occurrence to $P$: \textbf{(O1)} \emph{fresh contribution}, a root outside $\Lambda(P)$; \textbf{(O2)} \emph{relay}, another occurrence of a unit already in $\uR(P)$ with the same content, strength, and identity; \textbf{(O3)} \emph{paraphrase}, as (O2) with different surface text; \textbf{(O4)} \emph{alias resubmission}, as (O2) under a different identity $h'\in H_i$; \textbf{(O5)} \emph{excerpt split}, a different canonical proposition for a unit already in $\uR(P)$, at strength $\tau\le\rho_P(\upsilon)$; or \textbf{(O6)} \emph{strength revelation}, as (O5) but with $\tau>\rho_P(\upsilon)$, revealing better evidence from a known root. Let $\mathcal R=\{\text{O2,\ldots,O5}\}$ be the \emph{replication class}: none of its actions acquires new evidence, changes $\uR(P)$, or raises $\rho_P$, so none disturbs the admissibility of any other action. Action (O6) is outside $\mathcal R$, as intended behavior, bounded by the root ceiling $\rho^\ast$, and is not covered by the neutrality results below.

Agents know which root they render; the judge observes only the occurrence fields and never the controller of a displayed identity. Actions in $\mathcal R$ require no new evidence and cost accordingly, whereas (O1) requires a root the supplied dependence model recognizes as new. We make this a game.

\begin{definition}[Judgment game]
\label{def:game}
A judgment game is a tuple
\begin{equation}
    \mathcal G=\bigl(N,\{H_i\},\mathcal L,\{E_i\},\{c_i\},\{w_i,d_i\},J,\theta\bigr),
\end{equation}
where agent $i$ privately holds a root endowment $E_i\subseteq\mathcal L$ (endowments may overlap) and may submit label $\lambda\in E_i\cup\Lambda(P)$ at the current profile. An \emph{action list} $\alpha_i$ is a finite action sequence; a \emph{trace} interleaves $(\alpha_i)_{i\in N}$ admissibly and induces profile $P(\alpha)$. Costs are $c_i(\text{O0})=0$, $c_i(\text{O1})=c^f_i>0$, $c_i(\text{O2}\text{--}\text{O5})=c^r_i\in[0,c^f_i)$, $c_i(\text{O6})=c^v_i\ge0$, and utility is $U_i(\alpha)=w_i\mathbf 1[\decision_{J,\theta}(P(\alpha))=d_i]-c_i(\alpha_i)$ for target verdict $d_i\in\{0,1\}$ and stake $w_i>0$.
\end{definition}

Only the cost ordering $c^r_i<c^f_i$ is used below, and the neutrality results below hold for arbitrary nonnegative costs.

\begin{definition}[Replication deviation and replication-proofness]
\label{def:repdev}
$\alpha'_i$ is a \emph{replication-only deviation} from $\alpha_i$ if deleting $\mathcal R$-actions from both yields the same sequence; a \emph{coalition replication deviation} applies this to each $i\in M\subseteq N$. $(J,\theta)$ is \emph{coalition replication-proof} on domain $\mathcal D$ if $\decision_{J,\theta}$ is unchanged by every such deviation on every admissible trace in $\mathcal D$; a deviation is \emph{profitable} if it strictly increases $\sum_{i\in M}U_i$.
\end{definition}

\subsection{Matched Interventions}
\label{sec:matched}
\begin{definition}[Root replica and matched fresh extension]
\label{def:matched}
Fix $P$ and $a\in P$. A \emph{root replica} $b_R$ has $\uR(b_R)\in\uR(P)$ and $\tau(b_R)\le\rho_P(\uR(b_R))$, so $P_R=P\multisetplus\{b_R\}$ adds neither a unit nor strength. A \emph{matched pair at $a$} $(b_R,b_F)$ agrees in identity, text, content, target, sign, and strength, with $\lambda(b_R)=\lambda(a)$ and $\lambda(b_F)\notin\Lambda(P)$; the \emph{matched fresh extension} is $P_F=P\multisetplus\{b_F\}$.
\end{definition}
$P_R$ and $P_F$ differ in exactly one field: the supplied dependence annotation. An \emph{eligibility specification} is a pair $E^+,E^-\subseteq\profiles\times\profiles$, fixed before any score is compared, that may use structural well-formedness, freshness, role, strength, and a non-saturation test, but never the score inequality being tested; $E$ is \emph{non-vacuous} on $\mathcal D$ if some positive and some negative matched pair in $\mathcal D$ is eligible. A domain is \emph{$E$-lineage-rich at $P$} if it contains such a matched pair with $(P,P_F)\in E^+\cup E^-$.

\section{Principles and the Information Boundary}
\label{sec:principles}
A score judge is $J:\mathcal D\to[0,1]$ with score $\score_J(P)$; threshold $\theta$ induces $\decision_{J,\theta}(P)=\mathbf 1[\score_J(P)\ge\theta]$.

\begin{definition}[\RRI and \CRI]
\label{def:rri}
$J$ satisfies \RRI iff $\score_J(P\multisetplus\{b_R\})=\score_J(P)$ for every root-replica extension with both endpoints in $\mathcal D$. \CRI is the restriction of \RRI to replicas $b$ that agree with some $a\in P$ in content, root, target, sign, and strength.
\end{definition}
\RRI implies \CRI, and the converse fails on exactly the excerpt-split action, as shown in Ex.~\ref{ex:deviation}.

\begin{definition}[\LNI]
\label{def:lni}
$J$ satisfies \LNI iff every root-label bijection applied consistently leaves the score unchanged whenever both profiles lie in $\mathcal D$.
\end{definition}
\LNI requires invariance only to bijective renaming of root labels. It does not preclude provenance-aware weighting when the relevant source properties are represented explicitly as occurrence attributes used to determine $\tau$ (Asm.~\ref{as:strength}). Thus, \LNI removes dependence on arbitrary root names without restricting such weighting.

\begin{definition}[\FLR, \QFLR]
\label{def:flr}
Relative to fixed $E^+,E^-$, $J$ satisfies positive \FLR iff $\score_J(P')>\score_J(P)$ for every $(P,P')\in E^+$, and negative \FLR symmetrically on $E^-$. $J$ satisfies \QFLR with modulus $\delta$, a function of the strength of the fresh input, the relevant pre-intervention aggregate, and the topic base, iff the two score gaps are each at least $\delta>0$.
\end{definition}

In short, \RRI asks that repackaged evidence not matter, \LNI that root names not matter, \FLR that eligible fresh evidence move the score in its direction, and \QFLR that it do so by at least~$\delta$.

\subsection{Replication-Proofness Characterizes \RRI}
\label{sec:whyrri}
We call $(P^0,\ldots,P^k)$ a \emph{replication path} if each step is a root-replica extension. By Sec.~\ref{sec:game}, replication paths are exactly the profiles reachable by any interleaving of $\mathcal R$, performed by any agents through any number of identities.

\begin{proposition}[Replication-proofness characterises \RRI]
\label{prop:repproof}
For $J:\mathcal D\to[0,1]$: (i) $J$ satisfies \RRI; (ii) for every $\theta\in(0,1]$ and root-replica extension in $\mathcal D$, $\decision_{J,\theta}$ is unchanged; (iii) for every $\theta\in(0,1]$ and replication path in $\mathcal D$, $\decision_{J,\theta}(P^k)=\decision_{J,\theta}(P^0)$. These are equivalent.
\end{proposition}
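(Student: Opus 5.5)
We prove the cycle (i)$\Rightarrow$(iii)$\Rightarrow$(ii)$\Rightarrow$(i). For (i)$\Rightarrow$(iii), fix a replication path $(P^0,\ldots,P^k)$ in $\mathcal D$. Each step $P^{j}\to P^{j+1}$ is a root-replica extension with both endpoints in $\mathcal D$, so \RRI gives $\score_J(P^{j+1})=\score_J(P^j)$. Induction on $j$ yields $\score_J(P^k)=\score_J(P^0)$. Hence $\mathbf 1[\score_J(P^k)\ge\theta]=\mathbf 1[\score_J(P^0)\ge\theta]$ for every $\theta$, not merely $\theta\in(0,1]$.

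For (iii)$\Rightarrow$(ii), a single root-replica extension $(P,P\multisetplus\{b_R\})$ with both endpoints in $\mathcal D$ is a replication path of length $k=1$, so (ii) is the special case $k=1$ of (iii).

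The substantive direction is (ii)$\Rightarrow$(i), which we prove by contraposition, choosing the threshold to separate the two scores. Suppose \RRI fails, so some root-replica extension with $P,P_R\in\mathcal D$ has $\score_J(P_R)\neq\score_J(P)$. Without loss of generality $\score_J(P)<\score_J(P_R)$; the other case is symmetric. Set $\theta:=\score_J(P_R)$. Since $\score_J(P_R)>\score_J(P)\ge0$, we have $\theta\in(0,1]$. Then $\decision_{J,\theta}(P_R)=1$ and $\decision_{J,\theta}(P)=0$, so the decision changes and (ii) is contradicted. In the symmetric case take $\theta:=\score_J(P)$.

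The main point to check is that the threshold range $(0,1]$ is exactly large enough: any two distinct scores in $[0,1]$ can be separated by a threshold in that range. The step $\theta=\max\{\score_J(P),\score_J(P_R)\}>0$ guarantees this, because $\theta=0$ would make every decision equal to $1$ and so could never detect a change. Since the step of a replication path is defined to be a root-replica extension, no further appeal to the action semantics of Sec.~\ref{sec:game} is needed. The informal identification of replication paths with $\mathcal R$-interleavings is used only to interpret the result, not in the proof.
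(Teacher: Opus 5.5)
Your proof is correct and matches the paper's argument essentially verbatim: the same cycle (i)$\Rightarrow$(iii)$\Rightarrow$(ii)$\Rightarrow$(i), with the same separating threshold $\theta=\score_J(P_R)$ (or $\score_J(P)$ in the symmetric case) for the one substantive direction. Your explicit check that this $\theta$ lies in $(0,1]$, because it strictly exceeds another score in $[0,1]$, is a detail the paper leaves implicit.
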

\begin{proof}[Proof]
(i)$\Rightarrow$(iii) the score is constant along the path; (iii)$\Rightarrow$(ii) take $k=1$; (ii)$\Rightarrow$(i) if $\score_J(P)<\score_J(P_R)$, set $\theta=\score_J(P_R)\in(0,1]$ to get $\decision_{J,\theta}(P_R)=1\ne0=\decision_{J,\theta}(P)$, contradicting (ii); symmetric otherwise.
\end{proof}
The quantifier over $\theta$ is load-bearing: invariance at one \emph{fixed} deployed threshold is strictly weaker. At $v_0=\tfrac15$ with one supporter of strength $\tfrac15$ under occurrence \dfquad, one relay moves the score from $0.36$ to $0.488$, both below $\theta=\tfrac12$, so the verdict is stable at that threshold, although \RRI fails. We therefore describe \RRI as \emph{threshold-uniform replication immunity}.

\subsection{Information Necessity}
\label{sec:interfaces}
Let $\repr:\mathcal D\to\mathcal Z$ be any observation interface, $J$ is $\repr$-measurable if $J=\widehat J\circ\repr$. This covers prompts, sets, multisets, graphs, embeddings, and hand-designed statistics.
We now prove that better reasoning cannot replace lineage information, so some lineage signal must reach the judge.

\begin{theorem}[Information necessity]
\label{thm:information}
Let $\mathcal D$ be $E$-lineage-rich at $P$, and let $(P_R,P_F)$ be a matched pair in $\mathcal D$ with $(P,P_F)\in E^+$. If $\repr(P_R)=\repr(P_F)$, no $\repr$-measurable judge satisfies \RRI and positive strict \FLR on this witness. The statement for $E^-$ is symmetric.
\end{theorem}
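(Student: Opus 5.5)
The argument is a direct contradiction that chains the two principles through the equality of observations. Suppose, for contradiction, that $J=\widehat J\circ\repr$ satisfies \RRI and positive strict \FLR on the given witness.

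\emph{\RRI step.} By Def.~\ref{def:matched}, $b_R$ satisfies $\uR(b_R)=\uR(a)\in\uR(P)$. Since the pair is matched, $b_R$ has the same strength as $b_F$; by construction of a root replica this strength satisfies $\tau(b_R)\le\rho_P(\uR(b_R))$. Hence $P_R=P\multisetplus\{b_R\}$ is a root-replica extension. Both endpoints $P$ and $P_R$ lie in $\mathcal D$ by $E$-lineage-richness, so \RRI (Def.~\ref{def:rri}) gives $\score_J(P_R)=\score_J(P)$.

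\emph{\FLR step.} Since $(P,P_F)\in E^+$, positive strict \FLR (Def.~\ref{def:flr}) gives $\score_J(P_F)>\score_J(P)$.

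\emph{Measurability step.} Because $\repr(P_R)=\repr(P_F)$, we get $\score_J(P_R)=\widehat J(\repr(P_R))=\widehat J(\repr(P_F))=\score_J(P_F)$. Combining the three steps yields $\score_J(P)=\score_J(P_R)=\score_J(P_F)>\score_J(P)$, which is a contradiction. The case $E^-$ is identical, with the strict inequality reversed.

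The only delicate point is bookkeeping rather than mathematics. We must confirm that the hypotheses place $P$, $P_R$, and $P_F$ all in $\mathcal D$, so that both \RRI and \FLR apply; lineage-richness together with ``matched pair in $\mathcal D$'' supplies this. We must also confirm that eligibility is fixed independently of the score, so that applying \FLR to $(P,P_F)$ is legitimate and not circular. Finally, we should remark that nothing about $\widehat J$ was used beyond its being a function. This is why the result covers any interface, including LLM prompts that omit root labels.
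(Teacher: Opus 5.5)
Your proof is correct and follows the same argument as the paper. \RRI gives $\score_J(P_R)=\score_J(P)$, positive strict \FLR gives $\score_J(P_F)>\score_J(P)$, and $\repr(P_R)=\repr(P_F)$ forces $\score_J(P_R)=\score_J(P_F)$, which is a contradiction. One small correction: you do not need $\uR(b_R)=\uR(a)$, which Def.~\ref{def:matched} does not literally state; the definition of a root replica already supplies $\uR(b_R)\in\uR(P)$ and $\tau(b_R)\le\rho_P(\uR(b_R))$, and that is all you use.
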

\begin{proof}[Proof]
\RRI gives $\score_J(P_R)=\score_J(P)$. Strict \FLR gives $\score_J(P_F)>\score_J(P)$, so $\score_J(P_R)\ne\score_J(P_F)$, but $\repr(P_R)=\repr(P_F)$ forces equality for any $\repr$-measurable $J$.
\end{proof}

In plain terms, if a replica and a fresh piece of evidence look identical at the judge’s interface, no downstream reasoning can ignore one while responding to the other. If $\pi$ removes only the root label, a matched replica and its fresh counterpart share a $\pi$-image. Hence, no lineage-blind judge $J=\widehat J\circ\pi$, including an LLM judge whose prompt represents the two cases identically, satisfies \RRI and positive strict \FLR on an $E^+$ witness.

\subsection{Strategic Consequences}
\label{sec:strategic}
\begin{corollary}[Coalition replication-proofness and strict dominance]
\label{cor:repproof}
Let $J$ satisfy \RRI on $\mathcal D$, fix $\theta$ and a game $\mathcal G$ with traces in $\mathcal D$. (a) $(J,\theta)$ is coalition replication-proof for any coalition and any nonnegative costs. (b) If $c^r_i=0$ for $i\in M$, every coalition replication deviation is payoff-equivalent. Replication is then weakly, not strictly, dominated by abstaining. (c) If $c^r_i>0$ and $\alpha_i$ contains $k\ge1$ actions from $\mathcal R$, deleting them raises $U_i$ by exactly $kc^r_i$ for every $\alpha_{-i}$: $\alpha_i$ is strictly dominated, and no best response of any agent or coalition contains a replication action.
\end{corollary}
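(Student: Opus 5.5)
The plan is to reduce all three parts to a single structural fact, and then to Prop.~\ref{prop:repproof}. The fact is this: if $\alpha'$ is obtained from an admissible trace $\alpha$ by a coalition replication deviation, then $P(\alpha')$ and $P(\alpha)$ are connected by replication paths through a common ``$\mathcal R$-free'' profile.

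\textbf{Step 1: the $\mathcal R$-free profile.} Let $\alpha^\circ$ be the trace obtained by deleting every $\mathcal R$-action from all agents in $M$. By Def.~\ref{def:repdev}, $\alpha$ and $\alpha'$ have the same $\alpha^\circ$. I will show that $\alpha^\circ$ is admissible and that $P(\alpha)$ is reached from $P(\alpha^\circ)$ by appending the deleted occurrences one at a time, each being a root-replica extension.

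The key observation, from Sec.~\ref{sec:game}, is that an $\mathcal R$-action changes neither $\uR(P)$, $\Lambda(P)$, nor $\rho_P$. Hence deleting it does not change the admissibility of any later action. Freshness of (O1), the unit requirement for (O2)--(O5), and the thresholds $\rho_P(\upsilon)$ for (O5)/(O6) all depend only on these three quantities.

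Now reinsert the deleted occurrences at the end of the trace. Each reinserted occurrence was, at its original position, a replica of a unit already present with $\tau\le\rho$ at that time. Since $\uR$ and $\rho_P$ only grow along the trace (because $\rootagg=\max$), it remains a root replica with respect to the final profile. Multiset union is order-independent, so the endpoint equals $P(\alpha)$.

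This monotonicity and reordering argument is the main obstacle. I would handle it by an induction on the number of deleted actions. The admissibility claim needs care for (O5) versus (O6): deleting an $\mathcal R$-action leaves $\rho_P$ unchanged, so an (O6) action stays an (O6) action and an (O5) action stays an (O5) action.

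\textbf{Step 2: part (a).} By Step~1, $P(\alpha)$ and $P(\alpha')$ are both endpoints of replication paths starting at $P(\alpha^\circ)$. I need all intermediate profiles to lie in $\mathcal D$. I will take this from the hypothesis that traces lie in $\mathcal D$, reading it as applying to the prefixes and reordered traces. If that reading is not available, I will instead use RRI stepwise along the original trace order, where each prefix is a trace profile.

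Given this, Prop.~\ref{prop:repproof}(iii) gives
\[
\decision_{J,\theta}(P(\alpha))=\decision_{J,\theta}(P(\alpha^\circ))=\decision_{J,\theta}(P(\alpha')).
\]
The argument uses costs nowhere, so it holds for any coalition and any nonnegative costs.

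\textbf{Step 3: parts (b) and (c).} Since the verdict is unchanged, the indicator term of each $U_i$ is unchanged. Only the costs differ, by $c^r_i$ times the change in the number of $\mathcal R$-actions.

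For (b), $c^r_i=0$, so the payoffs are identical. Replication therefore ties with deleting it, which gives weak dominance. Strictness fails because the payoffs are equal.

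For (c), deleting $k$ $\mathcal R$-actions from $\alpha_i$ yields a replication-only deviation. By part (a), this holds the verdict fixed for every $\alpha_{-i}$, so $U_i$ rises by exactly $kc^r_i>0$. This gives strict dominance.

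Finally, any best response containing an $\mathcal R$-action would be strictly improved by deleting that action. The same holds for a coalition's summed utility, since each member with $k_i\ge1$ gains $k_ic^r_i$ and the others are unaffected. Hence no best response contains a replication action.
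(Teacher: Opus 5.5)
Your proposal is correct and follows essentially the paper's route. For (a), the paper's proof likewise rests on the fact that $\mathcal R$-actions leave $\uR(P)$, $\Lambda(P)$ and $\rho_P$ unchanged (its supplementary admissibility-neutrality lemma), and it then applies \RRI step by step over prefixes. You package that same induction as two replication paths out of the common $\mathcal R$-free profile before invoking Prop.~\ref{prop:repproof}(iii). Parts (b) and (c) then follow, as in the paper, from the unchanged verdict term and the cost difference $kc^r_i$, and your reading that ``traces in $\mathcal D$'' covers the reordered or partially deleted traces is the same implicit assumption the paper's terse induction relies on.
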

\begin{proof}[Proof]
(a) Each inserted or deleted $\mathcal R$-action is a root-replica step or its reversal, so by induction \RRI fixes the score at every prefix, whoever acts. (b) Zero replication cost leaves utilities exactly equal. (c) By (a) the verdict term is identical for every $\alpha_{-i}$, while the cost term drops by $kc^r_i$.
\end{proof}

Strategically, no agent or coalition can change an \RRI judge's verdict by repackaging represented evidence, so repackaging is strictly dominated whenever it is costly.

\begin{proposition}[Profitable deviations for occurrence-sensitive and content-keyed judges]
\label{prop:profitable}
At $\theta=\tfrac12$, there is a profile $P$, an agent $i$, and $\bar c>0$ such that (a) for Vote and occurrence \dfquad, a single relay (O2) strictly increases $U_i$ for every $c^r_i\in(0,\bar c)$; (b) for the content-keyed quotient (key $(\kappa,\lambda,p,s)$, hence \CRI), a single excerpt split (O5) does so. Neither judge is coalition replication-proof, and \RRI is strictly stronger than \CRI.
\end{proposition}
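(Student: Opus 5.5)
The plan is to build explicit witnesses at $\theta=\tfrac12$ and read off the utility gain. Throughout, agent $i$ has target verdict $d_i=1$ and stake $w_i>0$. I set $\bar c=w_i$. If a single $\mathcal R$-action flips $\decision_{J,1/2}$ from $0$ to $1$, then deviating from the action list ``abstain'' to ``perform that one action'' changes $U_i$ by $w_i-c^r_i>0$ for every $c^r_i\in(0,\bar c)$. This is exactly the claimed strict increase. So each part reduces to exhibiting a profile $P$ with $\score_J(P)<\tfrac12\le\score_J(P\multisetplus\{b\})$, where $b$ is a relay or an excerpt split.

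\textbf{Occurrence \dfquad.} Take topic base $v_0=\tfrac15$ and a single supporter occurrence of strength $\tfrac{3}{10}$, with no attackers. Use the \dfquad support aggregate $F=1-\prod(1-\tau)$ and the update $v_0+(1-v_0)F$. This gives $\score=\tfrac15+\tfrac45\cdot\tfrac3{10}=0.44<\tfrac12$. Now add a relay (O2): same unit, content, identity and strength. Occurrence semantics count it again, so $F=1-(0.7)^2=0.51$ and $\score=\tfrac15+\tfrac45\cdot0.51=0.608\ge\tfrac12$.

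\textbf{Vote.} I take Vote's score to be the share of supporting occurrences. Use one supporter and two attackers, which gives $\tfrac13$. Relaying the supporter gives $\tfrac24=\tfrac12\ge\theta$. If the paper fixes Vote's tie convention differently, I would shift to one supporter and one attacker, or add one more attacker, so that the relay crosses strictly.

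\textbf{Content-keyed quotient, part (b).} Merge occurrences keyed by $(\kappa,\lambda,p,s)$ with $\max$ within a key, then evaluate by \dfquad. Reuse the same $P$. Let $b$ be an excerpt split (O5): same $\lambda,p,s$, a different canonical $\kappa'$, and strength $\tau=\tfrac3{10}\le\rho_P(\upsilon)$. Asm.~\ref{as:strength} lets $\tau$ depend on content, so I choose $\kappa'$ with this strength. The key $(\kappa',\lambda,p,s)$ is new, so the quotient gains a second supporter of strength $\tfrac3{10}$. The computation is then identical to the relay case, $0.44\to0.608$. Yet $b$ is a root replica in the sense of Def.~\ref{def:matched}, so \RRI is violated.

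\textbf{Consequences.} In each case a root-replica extension changes $\decision_{J,1/2}$, so by Def.~\ref{def:repdev} none of these judges is coalition replication-proof. The same witness refutes (ii) of Prop.~\ref{prop:repproof}. It remains to check that the content-keyed quotient does satisfy \CRI. A \CRI-replica agrees with some $a\in P$ in $(\kappa,\lambda,p,s,\tau)$, so it falls into an existing key with strength not exceeding the current max. Absorption of $\rootagg=\max$ then leaves the quotient, and hence the score, unchanged. Thus \CRI holds while \RRI fails, and since \RRI implies \CRI, \RRI is strictly stronger.

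The main obstacle is pinning down the exact definitions of Vote and the content-keyed quotient that the paper intends. For this reason I state the numerical witnesses parametrically, needing only $\score(P)<\tfrac12\le\score(P')$, so they can be retuned to those definitions.
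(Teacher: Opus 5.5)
Your argument is correct in substance, but it uses different witnesses from the paper's. The paper's proof is Ex.~\ref{ex:deviation}: a single balanced profile that sits exactly on the threshold ($v_0=\tfrac12$, a supporter on $\ell_1$ and an attacker on $\ell_2$, both of strength $\tfrac15$), so every judge scores $\tfrac12$ and accepts. Agent~2 holds $\ell_2$, wants verdict $0$, and appends one more rendering of the attacker. A relay drops Vote to $\tfrac13$ and occurrence \dfquad{} to $0.42$. An excerpt split also flips $\qk$, while $\ql$ stays neutral under both. Because the clean score lies on the threshold, any downward nudge flips the verdict. One profile therefore serves every judge and shows the relay/excerpt contrast side by side. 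You instead push upward by replicating a supporter. This gives \dfquad{} a strict crossing with slack ($0.44\to0.608$). You also verify explicitly, by $\max$-absorption within a $(\kappa,\lambda,p,s)$ key, that $\qk$ satisfies \CRI, which the paper only asserts.

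The one thing to repair is the quantifier. The statement asks for a single $P$, $i$, $\bar c$ serving both (a) and (b), whereas you use one profile for Vote and another for \dfquad{} and $\qk$. Your \dfquad{} profile cannot serve Vote, which already scores $1$ there. A common witness in your direction is $v_0=\tfrac12$ with one supporter of strength $\tfrac3{10}$ and two attackers of strengths $\tfrac15$ and $\tfrac14$, giving attack aggregate $\tfrac25$. A relay of the supporter takes Vote from $\tfrac13$ to $\tfrac12$ and occurrence \dfquad{} from $0.45$ to $0.555$. An excerpt split at strength $\tfrac3{10}$ takes $\qk$ from $0.45$ to $0.555$, while a relay leaves it at $0.45$. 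Your tie-convention hedge is unnecessary: the paper fixes $\decision_{J,\theta}=\mathbf 1[\score_J\ge\theta]$, and its own witness relies on that convention. Finally, O2 keeps the identity of the relayed occurrence, so take $i$ to be the agent controlling that identity.
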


\begin{example}[Directly verifiable profitable deviation]
\label{ex:deviation}
Let $v_0=\tfrac12$ with a supporter on root $\ell_1$ and an attacker on root $\ell_2$, both strength $\tfrac15$; agent 2 holds $\ell_2$, prefers verdict $0$, $\theta=\tfrac12$. Every judge below scores $\tfrac12$ and returns $\decision=1$. Agent 2 appends one occurrence with root $\ell_2$, same target, polarity, and strength $\tfrac15$: (A) relay, or (B) excerpt split (a different canonical proposition).
\begin{center}
\footnotesize
\setlength{\tabcolsep}{3.5pt}
\begin{tabular}{lcc}
\toprule
Judge & (A) relay & (B) excerpt split \\
\midrule
Vote / occurrence \dfquad & flips to $0$ & flips to $0$ \\
Content-keyed quotient & neutral & flips to $0$ \\
\textbf{Lineage quotient} $\ql$ & neutral & neutral \\
\bottomrule
\end{tabular}
\end{center}
Both deviations add an occurrence of the unit $(\ell_2,c,-)$ with strength $\tfrac15\le\rho_P=\tfrac15$. Hence, by Def.~\ref{def:matched}, $\ql$ treats both as root replicas. The content-keyed quotient absorbs deviation (A), but deviation (B) introduces the new key $(\kappa',\ell_2,c,-)$. It therefore satisfies \CRI but violates \RRI. Under deviation (B), Agent 2's payoff rises from $0$ to $w_2-c^r_2>0$ for every $c^r_2\in(0,w_2)$, yielding Prop.~\ref{prop:profitable}.
\end{example}

\section{The Lineage Quotient}
\label{sec:repr}
The occurrence multiset retains every report. The content quotient $\qc(P)$ retains one record per $(\kappa,p,s)$ and merges all roots, while the content-keyed quotient $\qk(P)$ retains one record per $(\kappa,\lambda,p,s)$–the earlier key, kept as the informative failure case. We define $U_L(P)=\{(\lambda,p,s,\rho_P(\lambda,p,s)):(\lambda,p,s)\in\uR(P)\}$, and the \emph{lineage quotient} $\ql(P)=[U_L(P)]_{\cong}$ is its isomorphism class under root-atom bijections preserving $p,s,\rho$.

In plain terms, $\ql$ treats repeated renderings of the same root--role
as one unit, but keeps a genuinely fresh root as a new unit
(Table~\ref{tab:keys}).

\begin{table}[t]
\caption{Units retained by four representations under three interventions on a role already in the profile.}
\label{tab:keys}
\centering
\footnotesize
\setlength{\tabcolsep}{3pt}
\begin{tabular}{lccc}
\toprule
Representation (key) & Replica & Excerpt split & Fresh root \\
\midrule
Occurrence multiset & 2 & 2 & 2 \\
Content quotient $\qc$ $(\kappa,p,s)$ & 1 & 2 & 1 \\
Content-keyed $\qk$ $(\kappa,\lambda,p,s)$ & 1 & 2 & 2 \\
\textbf{Lineage quotient $\ql$ $(\lambda,p,s)$} & 1 & 1 & 2 \\
\bottomrule
\end{tabular}
\end{table}

\begin{proposition}[Content collapse is over-invariant]
\label{prop:contentcollapse}
Suppose $P$ contains role $(\kappa,p,s)$ at strength $t$, and let $b_F$ be a matched fresh-root occurrence of the same role and strength. Then $\qc(P\multisetplus\{b_F\})=\qc(P)$, so every $\qc$-measurable judge violates strict \FLR at $(P,P_F)$ directly, without invoking Thm.~\ref{thm:information}.
\end{proposition}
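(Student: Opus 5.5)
The plan is to compute both sides of $\qc$ directly and then use measurability to make the scores equal, which contradicts strict \FLR. The only real work is checking that $b_F$ contributes a record whose key and stored data already appear in $\qc(P)$.

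\emph{Step 1 (key of $b_F$).} By Def.~\ref{def:matched}, $b_F$ is matched to some $a\in P$. It therefore agrees with $a$ in content, target, sign and strength: $\kappa(b_F)=\kappa$, $p(b_F)=p$, $s(b_F)=s$, $\tau(b_F)=t$. It differs from $a$ only in its root label, since $\lambda(b_F)\notin\Lambda(P)$. The content quotient keys records by $(\kappa,p,s)$ and discards $\lambda$. So the key of $b_F$ is $(\kappa,p,s)$, which is already present in $P$.

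\emph{Step 2 (stored value is unchanged).} If $\qc$ keeps a strength per record, I will take it to be aggregated by $\rootagg=\max$, as fixed in Sec.~\ref{sec:units}. The statement is read with $t$ as the strength already recorded for $(\kappa,p,s)$ in $P$. Then the merged value is $\max(t,t)=t$. If $P$ carries a larger strength $t'\ge t$ on that key, the value is $\max(t',t)=t'$, which is also unchanged. Either way $\qc(P\multisetplus\{b_F\})=\qc(P)$. This is the step I expect to need care. One must commit to what a $\qc$ record stores, which is key plus max-strength, and make sure the matched strength does not exceed the stored one. The hypothesis "at strength $t$" together with the equal strength of $b_F$ guarantees this.

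\emph{Step 3 (failure of strict \FLR).} Let $J=\widehat J\circ\qc$. Step 2 gives $\score_J(P_F)=\widehat J(\qc(P_F))=\widehat J(\qc(P))=\score_J(P)$. This contradicts $\score_J(P_F)>\score_J(P)$, which positive strict \FLR demands whenever $(P,P_F)\in E^+$. The argument is symmetric for negative \FLR on $E^-$, with $\score_J(P_F)<\score_J(P)$ required. No appeal to Thm.~\ref{thm:information} is needed, because the collision is between $P$ and $P_F$ themselves rather than between $P_R$ and $P_F$.
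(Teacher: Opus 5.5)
Your proposal is correct and takes the same route as the paper, which gives no separate proof beyond the observation that $\qc$ keys records by $(\kappa,p,s)$ and discards $\lambda$. A matched fresh root in the same role therefore adds no record, and every $\qc$-measurable judge scores $P_F$ exactly as it scores $P$.

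One small correction to Step 1: Def.~\ref{def:matched} only makes $b_F$ agree with $b_R$, not with $a$; for instance, $b_R$ may be an excerpt split with different content. The fact that $b_F$ carries key $(\kappa,p,s)$ and strength $t$ therefore comes from the proposition's own hypothesis ``of the same role and strength,'' and you should cite that hypothesis rather than the definition.
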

Content-only deduplication is thus too coarse: it can merge independent
evidence from distinct roots simply because the reports state the same
content in the same role.
The quotients $\qc$ and $\ql$ are \emph{incomparable}: neither is a function of the other, since $\qc$ discards roots while $\ql$ discards content.

Write $P\approxL P'$ when one profile reaches the other by finitely many root-replica additions, reversals (deletions that remove neither the unit nor its aggregate strength), and root bijections. A domain $\mathcal D$ is \emph{orbit-connected} if any $P,P'\in\mathcal D$ with $P\approxL P'$ are joined by a path in $\mathcal D$; without it, \RRI can be vacuous between profiles with equal quotients.

\begin{lemma}[Quotient completeness]
\label{lem:complete}
For finite well-formed profiles under Assumptions~\ref{as:root}--\ref{as:role}, $\ql(P)=\ql(P')$ iff $P\approxL P'$.
\end{lemma}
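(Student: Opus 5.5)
We prove the two directions separately. The soundness direction ($P\approxL P'\Rightarrow\ql(P)=\ql(P')$) is checked one generating step at a time. The completeness direction builds an explicit chain of steps through a canonical normal form.

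\emph{Soundness.} It suffices to check that each generating step preserves $U_L$ up to a root-atom bijection that preserves $p,s,\rho$.
\begin{enumerate}[label=(\roman*)]
\item A root-replica addition $P\mapsto P\multisetplus\{b_R\}$ leaves $\uR(P)$ unchanged, because $\uR(b_R)\in\uR(P)$. Since $\tau(b_R)\le\rho_P(\uR(b_R))$ and $\rootagg=\max$ is absorbing, it also leaves every $\rho_P(\upsilon)$ unchanged. Hence $U_L$ is literally equal.
\item A reversal is, by definition, a deletion that removes neither the unit nor its aggregate strength. So $U_L$ is again unchanged.
\item A root bijection $\sigma$ applied consistently maps $U_L(P)$ to $\{(\sigma\lambda,p,s,\rho)\}$. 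This is exactly an isomorphism of the kind quotiented out in $\ql$.
\end{enumerate}
Since $\ql$ is invariant under each step, it is invariant along any finite chain.

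\emph{Completeness.} Suppose $\ql(P)=\ql(P')$. Then there is a bijection $\sigma$ between the root atoms of $U_L(P)$ and $U_L(P')$ with $U_L(P')=\sigma(U_L(P))$. We may extend $\sigma$ to a bijection of $\mathcal L$; finiteness makes this possible. Applying $\sigma$ to $P$ gives $\sigma P\approxL P$ with $U_L(\sigma P)=U_L(P')$.

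Next we reduce any profile $Q$ to a canonical \emph{witness core} $W(Q)$. For each unit $\upsilon\in\uR(Q)$, choose one occurrence $a_\upsilon\in Q$ with $\tau(a_\upsilon)=\rho_Q(\upsilon)$; one exists because the max over a finite multiset is attained. Then delete all other occurrences one at a time. Each deletion keeps $a_\upsilon$, so it removes neither the unit nor its max, and is therefore a legitimate reversal.

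Now $W(\sigma P)$ and $W(P')$ have the same units with the same strengths, but their witnesses may differ in $h,x,\kappa$. To bridge them, add each witness $a'_\upsilon$ of $W(P')$ to $W(\sigma P)$. This is a root-replica addition, since its unit is present and $\tau(a'_\upsilon)=\rho\le\rho$. Then delete the old witness $a_\upsilon$; this is a reversal, because $a'_\upsilon$ still carries the unit at the same max. The result is $W(P')$, and the full chain $P\to\sigma P\to W(\sigma P)\to W(P')\leftarrow P'$ witnesses $P\approxL P'$.

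\emph{Main obstacle.} The step most likely to need care is ensuring that every intermediate object is a \emph{well-formed} profile and that each added occurrence is an admissible replica. Assumptions~\ref{as:root}--\ref{as:strength} are what license this. Under Assumption~\ref{as:strength}, $\tau$ is a function of content, role and source attributes, so a witness with strength $\rho$ keeps that strength when inserted into the other profile. Under Assumption~\ref{as:root}, its label is fixed, so it renders the intended root. I would also check that the label bijection $\sigma$ respects the convention that one label names one root. Because $\sigma$ is a bijection on labels and Assumption~\ref{as:role} puts the role in the key, no two units are merged. Beyond this, the argument is bookkeeping.
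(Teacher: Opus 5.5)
Your proof is correct and follows essentially the same route as the paper. Soundness is checked step by step via absorption of $\max$. Completeness goes by renaming with the witnessing bijection, reducing each root--role unit to one maximal-strength representative through permitted reversals, exchanging representatives by append--delete pairs, and then undoing the reduction of $P'$ (your arrow $W(P')\leftarrow P'$ traversed backwards as replica additions).
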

\begin{proof}[Proof]
Root-replica steps, reversals, and root renamings preserve the quotient class by absorption of $\max$. Conversely, given equal classes, rename by a witnessing bijection, reduce each unit to a strength-maximal representative (a permitted reversal), then exchange representatives by add--delete pairs, since identity, wording, and content are outside the key.
\end{proof}

The next result shows that a judge ignores replicas and root names exactly when it reads the profile only through the lineage quotient.
\begin{theorem}[Factorization]
\label{thm:factor}
On an orbit-connected $\mathcal D$, a deterministic $J:\mathcal D\to[0,1]$ satisfies \RRI and \LNI iff $J=\widetilde J\circ\ql$ for a unique $\widetilde J$ on $\ql(\mathcal D)$.
\end{theorem}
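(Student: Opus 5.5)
The plan is to prove the two directions separately. Both go through Lemma~\ref{lem:complete}, which identifies the fibres of $\ql$ with the $\approxL$-classes. Orbit-connectedness is then used to turn the local conditions \RRI and \LNI into invariance along whole $\approxL$-paths.

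\emph{Sufficiency} ($\Leftarrow$). Suppose $J=\widetilde J\circ\ql$.

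For \RRI, take a root-replica extension $P_R=P\multisetplus\{b_R\}$ with both endpoints in $\mathcal D$. The replica satisfies $\uR(b_R)\in\uR(P)$, so $\uR(P_R)=\uR(P)$. It also satisfies $\tau(b_R)\le\rho_P(\uR(b_R))$, so absorption of $\rootagg=\max$ gives $\rho_{P_R}=\rho_P$ on every unit. Hence $U_L(P_R)=U_L(P)$, and in particular $\ql(P_R)=\ql(P)$.

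For \LNI, a consistent root-label bijection $\sigma$ maps $U_L(P)$ onto $U_L(P^\sigma)$. It acts as a bijection on root atoms and preserves $p$, $s$ and $\rho$, so the two sets are isomorphic and $\ql(P^\sigma)=\ql(P)$.

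In both cases the score is unchanged. This direction is just the easy half of Lemma~\ref{lem:complete}, and I would cite it as such.

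\emph{Necessity} ($\Rightarrow$). Assume $J$ satisfies \RRI and \LNI, and define $\widetilde J(z)=\score_J(P)$ for any $P\in\mathcal D$ with $\ql(P)=z$. I must show this is well defined.

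Let $P,P'\in\mathcal D$ with $\ql(P)=\ql(P')$. Lemma~\ref{lem:complete} gives $P\approxL P'$. Orbit-connectedness then provides a finite path $P=Q^0,\dots,Q^m=P'$ lying entirely in $\mathcal D$, each step of which is one of three moves:
\begin{itemize}
\item a root-replica addition, under which \RRI gives equal scores because both endpoints are in $\mathcal D$;
\item a reversal, which is a root-replica addition read backwards and so is covered by the same instance of \RRI;
\item a root bijection, under which \LNI gives equal scores.
\end{itemize}
Induction along the path yields $\score_J(P)=\score_J(P')$, so $\widetilde J$ is well defined and $J=\widetilde J\circ\ql$.

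Uniqueness holds because $\widetilde J$ is only required on $\ql(\mathcal D)$, where $\ql$ is surjective by definition. Two factorizations therefore agree on every $\ql(P)$.

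\emph{Main obstacle.} The delicate point is ensuring that the path steps really are the moves that \RRI and \LNI license, with both endpoints in $\mathcal D$.

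Reversals are defined as deletions that remove neither a unit nor its aggregate strength. Such a deletion turns $Q^{j+1}$ into $Q^j$, so the deleted occurrence is a root replica for $Q^j$: its unit stays in $\uR(Q^j)$ and its strength is at most $\rho_{Q^j}$. This is exactly the hypothesis of Def.~\ref{def:rri} applied to the pair $(Q^j,Q^{j+1})$.

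Bijection steps must be global relabelings, as in Def.~\ref{def:lni}, and not partial ones. I would check that the completeness construction in Lemma~\ref{lem:complete} applies a single witnessing bijection to the whole profile before any other moves, so that this condition is met.

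With both checks made, the induction along the path goes through and the factorization is proved.
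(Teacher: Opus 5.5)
Your proposal is correct and follows the paper's proof essentially step for step: sufficiency holds because replica additions and root renamings preserve $\ql$; necessity combines Lemma~\ref{lem:complete} with orbit connectivity and applies \RRI along the internal path, with a reversal handled by treating the deleted occurrence as a root replica of the smaller endpoint, and \LNI at renaming steps; uniqueness follows from surjectivity onto $\ql(\mathcal D)$. One step is superfluous: you plan to check that the construction in Lemma~\ref{lem:complete} renames globally, but the path you actually use comes from orbit connectivity, not from that construction, and its renaming steps are already global bijections by the definition of $\approxL$.
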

\begin{proof}[Proof]
If such $\widetilde J$ exists, replica steps and renamings preserve the quotient, hence the score. Conversely, \RRI handles additions/reversals, and \LNI handles renamings, so $J$ is invariant along every permitted step; Lemma~\ref{lem:complete} and orbit connectivity give an internal path between any two profiles with equal quotients, so $\widetilde J(\ql(P))=J(P)$ is well defined.
\end{proof}

Hence, building a judge with \RRI and \LNI reduces to choosing a score $\widetilde J$ on lineage quotients, and every such score gives a valid judge. In practice, however, a judge sees the profile through an observation interface $\repr$, such as a prompt, rather than through $\ql$ itself. Call $\repr$ \emph{invariant-complete} on orbit-connected $\mathcal D$ if every deterministic \RRI--\LNI judge factors through it.

\begin{theorem}[Coarsest invariant-complete representation]
\label{thm:minimal}
On a fixed orbit-connected $\mathcal D$, if $\repr$ is invariant-complete then $\ql|_{\mathcal D}=g\circ\repr$ for some $g$. Since $\ql$ is itself invariant-complete (Thm.~\ref{thm:factor}), it is, up to recoding, the coarsest such representation, and this holds without excluding source-reliability weighting, which enters through represented attributes that $\ql$ retains.
\end{theorem}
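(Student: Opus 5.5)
The plan is to show that $\ql$ is determined by $\repr$ on $\mathcal D$, i.e. $\repr(P)=\repr(P')\Rightarrow\ql(P)=\ql(P')$ for all $P,P'\in\mathcal D$. Once this implication holds, $g(z)=\ql(P)$ for any $P\in\repr^{-1}(z)\cap\mathcal D$ is well defined, and $\ql|_{\mathcal D}=g\circ\repr$ follows. I would argue by contraposition, using invariant-completeness of $\repr$ applied to a suitably chosen \RRI--\LNI judge.

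First, I need a deterministic judge that separates quotient classes. Fix $P,P'\in\mathcal D$ with $\ql(P)\ne\ql(P')$, and define $\widetilde J$ on $\ql(\mathcal D)$ by $\widetilde J(\ql(P))=1$ and $\widetilde J=0$ on every other class. Set $J=\widetilde J\circ\ql$. By the easy direction of Thm.~\ref{thm:factor} (replica steps and renamings preserve $\ql$), $J$ satisfies \RRI and \LNI. It is also a valid map $\mathcal D\to[0,1]$. Invariant-completeness then gives $J=\widehat J\circ\repr$ for some $\widehat J$. Since $J(P)=1\neq0=J(P')$, we get $\repr(P)\ne\repr(P')$, which is the contrapositive of the required implication.

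Second, the general factor map $g$ can be built directly, without choosing representatives, in case $\ql(\mathcal D)$ is too large to enumerate. The alternative is to note that $\ql$ itself, viewed as a map into an arbitrary set, is not $[0,1]$-valued. So the separating-judge argument above, applied pointwise to each pair, is the cleanest route, and the choice of representatives is harmless because the implication makes $g$ independent of that choice.

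Third, minimality "up to recoding" follows from two facts. Thm.~\ref{thm:factor} says $\ql$ is itself invariant-complete: every \RRI--\LNI judge factors through it, with $\widetilde J$ unique. Combined with the result above, any invariant-complete $\repr$ refines $\ql$. If $\repr$ is also a function of $\ql$, the two are mutually recodable. For the remark about source-reliability weighting, I would only observe that $\tau$ enters $\ql$ through $\rho$ (Asm.~\ref{as:strength}), so attribute-based weighting is retained and no separate argument is needed.

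The main obstacle is making sure the indicator judge is admissible. The score space is $[0,1]$, so $\{0,1\}$-valued indicators are allowed. Determinism is trivial. Orbit-connectedness is needed only for Thm.~\ref{thm:factor}'s converse, which this argument does not use; so the only thing to check is that $J$ genuinely satisfies \RRI and \LNI on $\mathcal D$. That check reduces to quotient invariance under single steps, i.e. the easy half of Lemma~\ref{lem:complete}.
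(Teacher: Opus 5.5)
Your proposal is correct and matches the paper's proof: the paper also uses the quotient-class indicator $J_P(R)=\mathbf 1[\ql(R)=\ql(P)]$. It observes that this indicator satisfies \RRI and \LNI because single replica steps and renamings preserve $\ql$, invokes invariant-completeness to get $\repr(P)\ne\repr(P')$ whenever $\ql(P)\ne\ql(P')$, then defines $g$ and cites Thm.~\ref{thm:factor} for minimality. One small correction: orbit connectivity is unused in the separating step, but the minimality half does use it, since the claim that $\ql$ is itself invariant-complete is exactly the converse direction of Thm.~\ref{thm:factor}.
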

\begin{proof}[Proof]
If $\repr(P)=\repr(P')$ but $\ql(P)\ne\ql(P')$, the indicator $J_P(R)=\mathbf 1[\ql(R)=\ql(P)]$ satisfies \RRI and \LNI, so invariant-completeness makes it $\repr$-measurable---contradicting $\repr(P)=\repr(P')$.
\end{proof}

Together with Thm.~\ref{thm:factor}, this makes $\ql$ both sufficient and minimal: any representation supporting \emph{every} deterministic \RRI--\LNI judge must allow $\ql$ to be recovered. A single judge may need less; on direct stars, \lqdf needs only $v_0$ and both aggregates, a statistic strictly coarser than $\ql$ (supplement).

% This bound is relative to the \emph{full} deterministic \RRI--\LNI class on a fixed domain: a restricted family can need strictly less---for instance, on the audited direct-star grid, \dfquad depends only on $(v_0,$ signed support/attack aggregates$)$, a statistic strictly coarser than $\ql$ that still satisfies all principles for that one judge (supplement).

\section{Lifting the Lineage Quotient to QBAFs}
\label{sec:qbaf}
An occurrence QBAF is $Q^o=\langle C\dot\cup O,R^-,R^+,\tau,c,\equivL\rangle$ with structural claim nodes $C\ni c$, report occurrences $O$, and $\equivL$ defined only on $O$. Thus, claim nodes are never merged by lineage. Let $q_0$ map each occurrence to its root--role class. 
The graph is \emph{quotient-compatible} if equivalent occurrences have identical signed incoming and outgoing neighborhoods after all endpoints are mapped by $q_0$. That is, occurrences merged into one class must have the same signed incoming and outgoing neighbors at the quotient level. Canonical content may still differ within a class, so different excerpts of the same root can remain in that class.

For each class, define its strength by $\bar\tau([a])=\max\{\tau(b):b\equivL a\}$, which is independent of the chosen representative by construction. The quotient $Q^o/{\equivL}$ collapses each root--role class into a single node and removes duplicate induced edges. Classes that violate quotient compatibility are treated as malformed.
A QBAF is \emph{lineage-stratified} if there exists a rank function $r$ such that $r(a)<r(b)$ for every edge $(a,b)$ and $r(a)=r(b)$ whenever $a\equivL b$. The quotient of every lineage-stratified QBAF is therefore acyclic. The proof is given in the supplement.
For a deterministic, isomorphism-invariant semantics $\sigma$, its lineage-quotient lifting is $\operatorname{LQ}[\sigma](Q^o)=\sigma(Q^o/{\equivL})$, defined on $\sigma$-admissible $Q^o$.

\begin{theorem}[Generic lifting]
\label{thm:lifting}
For every $\sigma$-admissible $Q^o$: (1) $\operatorname{LQ}[\sigma]$ satisfies structural \RRI (invariance to a new occurrence node duplicating a class and its quotient neighborhood, at strength at most the class aggregate); (2) it satisfies \LNI; (3) with singleton classes it equals $\sigma$ up to isomorphism.
\end{theorem}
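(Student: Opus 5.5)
The plan is to prove all three parts at the level of the quotient graph. In each case I will show that the relevant operation on $Q^o$ yields a quotient $Q^o/{\equivL}$ that is either identical or isomorphic (as a labeled, signed, strength-weighted QBAF with distinguished topic $c$) to the original quotient. Since $\sigma$ is deterministic and isomorphism-invariant, the topic score $\sigma(\cdot)(c)$ is then unchanged. The common tool is a small lemma: the quotient construction commutes with isomorphisms of occurrence QBAFs that preserve $\equivL$, signs, $\tau$, and claim nodes. This lemma is routine, since classes, the class strength $\bar\tau$ (a $\max$ over the class), and the deduplicated induced edges are all defined from these data alone.

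\textbf{(1) Structural \RRI.} Let $Q'^o$ add an occurrence $b$ with $b\equivL a$ for some existing $a$, with $\tau(b)\le\bar\tau([a])$, whose signed neighborhood maps under $q_0$ onto the quotient neighborhood of $[a]$. First, $b$ joins the existing class $[a]$, because $\uR(b)=\uR(a)$, so no new quotient node appears. Second, the class strength is unchanged: $\max(\bar\tau([a]),\tau(b))=\bar\tau([a])$, which is exactly the absorption property of $\rootagg=\max$ from Sec.~\ref{sec:units}. Third, every induced edge through $b$ maps to an edge already incident to $[a]$, and duplicates are removed, so the edge set is unchanged. Quotient compatibility is preserved because $b$'s quotient neighborhood equals that of $[a]$. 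Lineage stratification is also preserved by assigning $r(b)=r(a)$: every edge at $b$ has a parallel edge at some class member, so the rank inequalities transfer. Hence $\sigma$-admissibility holds for both graphs, $Q'^o/{\equivL}=Q^o/{\equivL}$ literally, and the scores coincide.

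\textbf{(2) \LNI and (3) singletons.} For (2), a bijection $\pi$ on $\mathcal L$ relabels occurrences but leaves $\equivL$ unchanged as a partition, since $\uR$ changes injectively. The identity map on nodes therefore induces an isomorphism of the quotients; only the names of the class nodes change. Isomorphism invariance of $\sigma$ then gives equal scores. For (3), if every class is a singleton, the map $a\mapsto[a]$ is a bijection on nodes. It preserves strengths ($\bar\tau([a])=\tau(a)$), signed edges, and the claim nodes, and no duplicate edges arise. So $Q^o\cong Q^o/{\equivL}$, and $\operatorname{LQ}[\sigma](Q^o)=\sigma(Q^o)$ up to this isomorphism.

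\textbf{Main obstacle.} I expect the hardest step to be the third point in (1): checking that the new node introduces no new quotient edge and does not break admissibility. This is especially delicate when $b$'s neighbors are themselves occurrences whose classes could be affected. The wording "duplicating a class and its quotient neighborhood" should settle this. I would first formalize that hypothesis as the condition that $q_0$ maps $b$'s signed in- and out-edges into the existing edge set of $Q^o/{\equivL}$ at $[a]$. I would then verify that no other class changes membership, which holds because $b$ is the only new node.
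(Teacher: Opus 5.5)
Your proposal is correct and matches the paper's proof: a structural replica joins its existing class, $\max$ absorbs its strength, and its induced edges already exist, so the quotient is unchanged; a root renaming only renames classes; singleton classes reproduce the raw graph; and isomorphism invariance of $\sigma$ carries each identity to the score. Your additional checks, that quotient compatibility and lineage stratification survive the replica (via $r(b)=r(a)$ and equal ranks within a class), go beyond what the paper writes out but are consistent with it.
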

\begin{proof}[Proof]
A structural replica adds no quotient node or edge, since its class and neighborhood already exist, and $\max$ absorbs its strength; a root renaming only renames classes; singleton classes reproduce the raw graph. Isomorphism invariance of $\sigma$ transfers each identity.
\end{proof}

Operationally, the lifting is \emph{quotient first, evaluate second}: replicas disappear before $\sigma$ runs, root names cannot matter to an isomorphism-invariant $\sigma$, and graphs without replicas are scored as before.

Call $\sigma$ \emph{signed-aggregative} if $\sigma(c)=I(v_0,A^+(V^+),A^-(V^-))$ with $A^\pm$ permutation-invariant and $I$ strictly monotone in each aggregate.

\begin{theorem}[Aggregative \FLR characterisation]
\label{thm:characterization}
Let every eligible fresh input be a direct topic leaf. $\operatorname{LQ}[\sigma]$ satisfies positive \FLR iff $A^+(V^+\multisetplus\{t\})>A^+(V^+)$ for every eligible pair; negative \FLR symmetrically for $A^-$.
\end{theorem}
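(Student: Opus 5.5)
The plan is to reduce positive \FLR for $\operatorname{LQ}[\sigma]$ to a statement about the topic score computed on the quotient graph. Then we read off the equivalence from strict monotonicity of $I$. Fix an eligible pair $(P,P_F)\in E^+$. Its fresh occurrence $b_F$ is a direct supporter of the topic $c$ with strength $t$. Since $\lambda(b_F)\notin\Lambda(P)$, its root--role class $(\lambda(b_F),c,+)$ is new. So $Q^o_F/{\equivL}$ equals $Q^o/{\equivL}$ plus exactly one new node of class strength $\bar\tau=t$, attached by a single support edge to $c$. No existing class is merged or changed, because classes are keyed by root and the fresh root differs from every represented one.

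Next we use the direct-leaf hypothesis to pin down the multisets. The new node has no incoming edges, and its only outgoing edge goes to $c$. Therefore the strengths of all other quotient nodes are unchanged; the only assumption needed for this is that $\sigma$ is computed on the acyclic quotient, where nodes not downstream of the leaf are unaffected. On direct stars this is immediate. In general it follows because the leaf feeds only $c$. Hence the topic's supporter strength multiset passes from $V^+$ to $V^+\multisetplus\{t\}$, while $V^-$ and $v_0$ are unchanged. By signed-aggregativity,
\[
\operatorname{LQ}[\sigma](P_F)(c)-\operatorname{LQ}[\sigma](P)(c)=I\bigl(v_0,A^+(V^+\multisetplus\{t\}),A^-(V^-)\bigr)-I\bigl(v_0,A^+(V^+),A^-(V^-)\bigr).
\]
Permutation invariance of $A^+$ makes this independent of node ordering and labels, consistent with Thm.~\ref{thm:lifting}.

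Both directions then come from strict monotonicity of $I$ in its second argument, applied pairwise. If $A^+$ strictly increases on every eligible pair, the difference is positive, which gives \FLR. Conversely, suppose $A^+(V^+\multisetplus\{t\})\le A^+(V^+)$ for some eligible pair. Strict monotonicity then gives a score that is at most the old score, so \FLR fails on that pair. Here we must note that ``strictly monotone'' means $I$ is strictly increasing, so a non-increase of the argument yields a non-increase of the value. The negative case is symmetric, with $A^-$ and $I$ strictly decreasing in its third argument.

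The main obstacle is the structural step: showing that adding a fresh direct leaf changes only the topic's support multiset. Intermediate strengths must not shift, and the fresh node must not collide with an existing class. I would handle this by invoking the supplement's acyclicity of lineage-stratified quotients and stating explicitly that, under the direct-leaf assumption, the set of nodes reachable from the new node is $\{c\}$.
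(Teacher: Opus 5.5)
Your proposal is correct and follows the paper's argument. The fresh root creates exactly one new singleton class of strength $t$ attached only to $c$, so only $V^+$ (resp.\ $V^-$) changes, and strict monotonicity of $I$ then makes the topic score move iff the aggregate does. Your extra checks make explicit what the paper's one-line proof leaves implicit: that the fresh class cannot collide with an existing one, and that in the converse direction a non-increase of $A^+$ yields a non-increase of the score.
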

\begin{proof}[Proof]
A fresh root creates one new singleton-strength leaf, changing only $V^+$ (resp. $V^-$); strict monotonicity of $I$ makes the score move iff the aggregate does.
\end{proof}

Intuitively, the quotient delivers a fresh direct report as a new input, but responding is the aggregator's job: the score moves its way exactly when the pooled value grows, which a maximum aggregator fails to do for inputs not above the current maximum.

\dfquad \cite{rago2016dfquad} aggregates by noisy-OR, $F(v_1,\ldots,v_k)=1-\prod_i(1-v_i)$, and scores the topic by
\begin{equation}
\sigma(c)=
\begin{cases}
 v_0-v_0(v_a-v_s), & v_a>v_s,\\
 v_0, & v_a=v_s,\\
 v_0+(1-v_0)(v_s-v_a), & v_s>v_a,
\end{cases}
\label{eq:dfquad}
\end{equation}
where $v_s,v_a$ are the supporter/attacker aggregates and $v_0=\tau(c)$.

\begin{proposition}[Strict direct response]
\label{prop:response}
Let $v_0\in(0,1)$. If $v_s<1$, a fresh direct supporter of strength $t\in(0,1]$ strictly increases the \dfquad score with $v_a$ fixed; if $v_a<1$, the symmetric attacker strictly decreases it.
\end{proposition}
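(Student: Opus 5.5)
The plan has two steps: show that the noisy-OR aggregate strictly increases, then show that the piecewise \dfquad combination in \eqref{eq:dfquad} is strictly increasing in $v_s$ for fixed $v_a$ and $v_0\in(0,1)$. This is the concrete instance of Thm.~\ref{thm:characterization}: we check the aggregate condition for $F$ and prove the strict monotonicity of $I$ that the characterisation presupposes.

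\textbf{Step 1 (aggregate).} Adding a fresh direct supporter of strength $t$ gives
\[
v_s'=1-(1-t)\prod_i(1-v_i)=1-(1-t)(1-v_s)=v_s+t(1-v_s).
\]
Since $t>0$ and $v_s<1$, we get $v_s'>v_s$. Here the hypothesis $v_s<1$ is used, and it is exactly the non-saturation condition. With $v_s=1$ the aggregate is stuck and no response is possible.

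\textbf{Step 2 (combination).} Define $g(v_s)=\sigma(c)$ with $v_a$ fixed. I would show $g$ is continuous and strictly increasing on $[0,1]$. On the branch $v_s<v_a$, $g=v_0-v_0(v_a-v_s)$ has slope $v_0>0$. On the branch $v_s>v_a$, $g=v_0+(1-v_0)(v_s-v_a)$ has slope $1-v_0>0$. Both branches agree with the middle value $v_0$ at $v_s=v_a$, so $g$ is continuous and piecewise linear with positive slopes, hence strictly increasing globally. The step needing care is a move from $v_s$ to $v_s'$ that crosses the breakpoint $v_a$. I would handle it by chaining the inequalities $g(v_s)<g(v_a)=v_0<g(v_s')$, or simply by citing the general fact that a continuous function with positive slope on each piece is strictly increasing. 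The hypothesis $v_0\in(0,1)$ is what makes both slopes strictly positive: $v_0=0$ kills the lower branch and $v_0=1$ kills the upper branch.

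\textbf{Symmetric case.} For an attacker, Step 1 gives $v_a'=v_a+t(1-v_a)>v_a$ when $v_a<1$. Step 2 then shows $\sigma$ is strictly decreasing in $v_a$ for fixed $v_s$, with slopes $-v_0$ and $-(1-v_0)$ on the two branches, and the same continuity argument at the breakpoint. I expect no real obstacle here. The only points needing attention are the breakpoint crossing and making explicit where each hypothesis ($v_s<1$, $t>0$, $0<v_0<1$) is used.
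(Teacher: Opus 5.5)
Your proof is correct and follows the paper's own route. The paper also first computes the noisy-OR update $v_s'=v_s+t(1-v_s)>v_s$ and then uses strict monotonicity of the piecewise-linear \dfquad{} combination, with slopes $v_0$ and $1-v_0$, including the branch-crossing case. There the paper writes the gain as $(1-v_0)(v_s'-v_a)+v_0(v_a-v_s)>0$, which is equivalent to your chain $g(v_s)<g(v_a)=v_0<g(v_s')$.
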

Noisy-OR gives $v'_s=v_s+t(1-v_s)>v_s$, and Eq.~\eqref{eq:dfquad} is strictly increasing in $v_s$ across all three branches (full case analysis in the supplement).

\begin{corollary}[The principles are jointly satisfiable]
\label{cor:consistent}
On well-formed lineage-stratified acyclic occurrence QBAFs, \lqdf satisfies structural \RRI, \LNI, and both signed strict \FLR forms, and equals \dfquad when every class is a singleton.
\end{corollary}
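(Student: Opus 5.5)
The plan is to assemble Cor.~\ref{cor:consistent} from the results already established rather than argue from scratch. Set $\lqdf=\operatorname{LQ}[\sigma_{\mathrm{DF}}]$ with $\sigma_{\mathrm{DF}}$ the \dfquad semantics. The first step is to check admissibility. On a well-formed lineage-stratified occurrence QBAF, quotient compatibility holds by well-formedness, so $Q^o/{\equivL}$ is defined. By the stratification remark preceding Thm.~\ref{thm:lifting}, the quotient is acyclic. \dfquad is deterministic and isomorphism-invariant on acyclic QBAFs, because strengths are computed by a unique topological recursion whose noisy-OR aggregation is permutation-invariant. Every such $Q^o$ is therefore $\sigma_{\mathrm{DF}}$-admissible, and Thm.~\ref{thm:lifting}(1)--(3) give structural \RRI, \LNI, and agreement with \dfquad when all classes are singletons.

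The second step is strict \FLR, which I would prove by applying Thm.~\ref{thm:characterization} together with Prop.~\ref{prop:response}. \dfquad at the topic is signed-aggregative: $I(v_0,v_s,v_a)$ is given by Eq.~\eqref{eq:dfquad}, and $A^\pm=F$ is noisy-OR. Strict monotonicity of $I$ in each aggregate needs $v_0\in(0,1)$, so I would fix the eligibility specification $E^\pm$ to include the conditions $v_0\in(0,1)$, fresh input strength $t>0$, and the non-saturation test $v_s<1$ (resp.\ $v_a<1$); all of these are permitted ingredients of an eligibility specification. A fresh direct leaf becomes a new singleton class in the quotient, so $V^+$ gains exactly the element $t$. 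Then $F(V^+\multisetplus\{t\})=v_s+t(1-v_s)>v_s$, and Prop.~\ref{prop:response} yields the strict score increase. The negative case is symmetric with $v_a$.

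The main obstacle is making sure the fresh root actually reaches the quotient as a \emph{new} node and leaves everything else unchanged. Its label lies outside $\Lambda(P)$, so its root--role class is a singleton distinct from all existing classes. Its only out-edge goes to $c$, so no other node's strength changes. Claim nodes are never merged. Acyclicity is preserved because a leaf with a single edge into $c$ can be given rank below $c$.

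If eligible fresh inputs are allowed at non-leaf positions, direct application of Thm.~\ref{thm:characterization} is not available. In that case I would restrict the \FLR claim of the corollary to the direct-leaf eligibility specification, leaving deeper inputs to the propagation result. I would also note that the signed versions need no extra argument beyond symmetry of Eq.~\eqref{eq:dfquad} under swapping $v_s$ with $v_a$ and $v_0$ with $1-v_0$.
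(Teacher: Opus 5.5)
Your proposal is correct and follows the paper's route. The paper gives the corollary no separate proof and assembles it exactly as you do: Thm.~\ref{thm:lifting}(1)--(3) for structural \RRI, \LNI and singleton agreement, with admissibility supplied by well-formedness and the lineage-stratification acyclicity lemma, and then Thm.~\ref{thm:characterization} with Prop.~\ref{prop:response} for strict \FLR under a direct-leaf, non-saturated eligibility specification with $v_0\in(0,1)$. One small precision: the attacker case is not a plain symmetry of Eq.~\eqref{eq:dfquad} but the duality $I(v_0,v_s,v_a)=1-I(1-v_0,v_a,v_s)$, which also reflects the score.
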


\subsection{Propagation and a Quantitative Modulus}
\label{sec:propagation}

A fresh report may attach to an intermediate claim rather than the topic; whether its effect reaches the topic depends on the strengths and edge polarities along the way. Say $d$ \emph{influences $n$ sign-consistently} with sign $\varepsilon(n)$ if at least one directed path leads from $d$ to $n$ (the trivial path when $n=d$) and either all such paths contain an even number of attack edges, $\varepsilon(n)=+1$, or all contain an odd number, $\varepsilon(n)=-1$.

% A fresh report may attach to an intermediate claim rather than the topic; whether its effect reaches the topic depends on edge polarities along the way. Say $d$ \emph{influences $n$ sign-consistently} with sign $\varepsilon(n)$ if every directed path from $d$ to $n$ has the same attack-edge parity $\varepsilon(n)$.

\begin{theorem}[Strict propagation along sign-consistent paths]
\label{thm:propagate}
Let $Q=Q^o/{\equivL}$ be acyclic with every base strength in $(0,1)$. Add one eligible fresh root as a direct leaf of $d$, polarity $\varepsilon_0$, strength $t\in(0,1)$, and suppose $d$ influences $c$ sign-consistently with sign $\varepsilon$. Then $\operatorname{LQ}[\dfquad](c)$ changes strictly, with sign $\varepsilon_0\varepsilon$; this contains the direct-leaf case $d=c$, $\varepsilon=+1$.
\end{theorem}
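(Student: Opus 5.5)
We argue by induction along a topological order of the acyclic quotient $Q$, comparing the strength $\sigma(n)$ in the original quotient with $\sigma'(n)$ in the quotient extended by the fresh leaf. Nodes not reachable from $d$ have identical incoming subgraphs, so $\sigma'(n)=\sigma(n)$ for them. The base case is $n=d$. By Thm.~\ref{thm:lifting}, the fresh root is a new singleton class, so it adds exactly one leaf of strength $t$ to the supporter multiset (if $\varepsilon_0=+1$) or the attacker multiset (if $\varepsilon_0=-1$) of $d$. Under noisy-OR the relevant aggregate moves from $v$ to $v+t(1-v)$. This is strict provided $v<1$, and $v<1$ holds because every base strength lies in $(0,1)$ and $t<1$. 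For this we first record an auxiliary fact: in an acyclic \dfquad graph with all base strengths in $(0,1)$, every node strength lies in $(0,1)$. This follows by induction: Eq.~\eqref{eq:dfquad} maps $v_0\in(0,1)$ and aggregates in $[0,1)$ into $(0,1)$, and a noisy-OR of finitely many values in $(0,1)$ stays below $1$. Then Prop.~\ref{prop:response} gives a strict change at $d$ with sign $\varepsilon_0$.

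For the inductive step, take $n$ reachable from $d$, with $n\neq d$. Its in-neighbours that are reachable from $d$ are exactly the $m$ lying on some $d$-to-$n$ path. Sign-consistency of $d$ at $n$ forces sign-consistency at each such $m$, and the sign is inherited: if $m$ supports $n$, then $\varepsilon(m)=\varepsilon(n)$; if $m$ attacks $n$, then $\varepsilon(m)=-\varepsilon(n)$. The reason is that every $d$-to-$m$ path extends to a $d$-to-$n$ path through the edge $(m,n)$, so a parity clash at $m$ would yield a parity clash at $n$. By the inductive hypothesis, every changed supporter of $n$ moves strictly in direction $\varepsilon_0\varepsilon(n)$, and every changed attacker moves strictly in direction $-\varepsilon_0\varepsilon(n)$. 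There is at least one changed in-neighbour, namely the penultimate node of some path. Noisy-OR is strictly increasing in each argument when the other arguments are $<1$, which holds by the auxiliary fact. Hence $v_s$ moves strictly, weakly, or not at all in direction $\varepsilon_0\varepsilon(n)$, $v_a$ moves in the opposite direction, and at least one of the two moves strictly. Eq.~\eqref{eq:dfquad} is jointly continuous, strictly increasing in $v_s$, and strictly decreasing in $v_a$ when $v_0\in(0,1)$; the case analysis behind Prop.~\ref{prop:response} extends to $v_a$ by symmetry. So $\sigma'(n)-\sigma(n)$ has sign $\varepsilon_0\varepsilon(n)$. Setting $n=c$ gives the claim, and $d=c$ is the base case itself.

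The main obstacle is making the monotonicity of Eq.~\eqref{eq:dfquad} rigorous when $v_s$ and $v_a$ move simultaneously and may cross the branch boundary $v_s=v_a$. The score equals $v_0+(1-v_0)(v_s-v_a)$ when $v_s\ge v_a$ and $v_0+v_0(v_s-v_a)$ otherwise, so it is a strictly increasing function of the difference $v_s-v_a$ with positive slopes $1-v_0$ and $v_0$. Once this is observed, the simultaneous change reduces to a strict change in $v_s-v_a$, which follows from the previous paragraph. The other delicate point is the strict positivity of the noisy-OR derivatives, $\prod_{j\neq i}(1-v_j)>0$, which again rests on the auxiliary bound that all strengths are $<1$.
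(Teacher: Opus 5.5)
Your proof is correct and follows essentially the same route as the paper's. The shared ingredients are a non-degeneracy lemma keeping every score in $(0,1)$, parity inheritance to in-neighbours reachable from $d$, Prop.~\ref{prop:response} at the base node, and a topological induction in which supporter and attacker changes reinforce $v_s-v_a$ under the strictly increasing piecewise-linear \dfquad map. One point to tighten: the inductive claim should range over nodes that $d$ influences sign-consistently (the paper uses the set $R^*$ of nodes on $d$-to-$c$ paths), not over all nodes reachable from $d$, because $\varepsilon(n)$ is undefined and strictness can fail elsewhere. Your closure argument already shows this set is closed under reachable in-neighbours, so the induction goes through once it is stated that way.
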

% \begin{proof}[Proof]
% Under the strength hypothesis every evaluated aggregate stays in $(0,1)$, so noisy-OR is strictly monotone throughout and the \dfquad influence function $\phi(\tau,v_s-v_a)$ is piecewise linear with strictly positive slopes. Induct over rank on the nodes reachable from $d$ toward $c$: sign consistency propagates backward, so a node's changing supporters and attackers move its $(v_s-v_a)$ difference with matching, reinforcing sign rather than competing, and strict monotonicity of $\phi$ transfers this to the score at every step, ending at $c$ (full induction in the supplement).
% \end{proof}

% Sign consistency cannot be dropped: on a diamond where an intermediate root supports two claims that respectively support and attack the topic with symmetric base strengths $\tfrac12$, a fresh supporter of the shared intermediate claim moves both intermediate scores strictly yet leaves the topic score exactly unchanged (worked example in the supplement), because the two opposite-parity paths cancel exactly.

\begin{proof}[Proof]
Interior strengths keep every aggregate in $(0,1)$, where noisy-OR and the \dfquad influence function are strictly monotone; sign consistency makes all changes at a node reinforce rather than compete, so induction over rank carries a strict change from $d$ to $c$ (supplement).
\end{proof}

In other words, with interior strengths, fresh evidence deeper in the graph still moves the topic strictly when all paths agree on the net sign; each attack edge flips the sign, so supporting an attacker of the topic lowers it. The condition cannot be dropped: on a symmetric diamond with opposite-parity paths, a fresh supporter moves both intermediate claims yet leaves the topic exactly unchanged, as the two paths cancel (supplement).

\begin{theorem}[Tight response modulus]
\label{thm:modulus}
Let $v_0\in(0,1)$, $v_s,v_a\in[0,1)$. A fresh direct supporter of strength $t\in(0,1]$ changes the topic score by $\Delta\sigma(c)\ge t(1-v_s)\min\{v_0,1-v_0\}>0$; a fresh attacker decreases it by at least $t(1-v_a)\min\{v_0,1-v_0\}$. Both bounds are attained, so \lqdf satisfies \QFLR with modulus $\delta(t,v,v_0)=t(1-v)\min\{v_0,1-v_0\}$.
\end{theorem}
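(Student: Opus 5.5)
The plan is to view \dfquad's topic score from Eq.~\eqref{eq:dfquad} as a function of the single difference $x=v_s-v_a\in(-1,1)$. Write $f(x)=v_0+v_0x$ for $x\le0$ and $f(x)=v_0+(1-v_0)x$ for $x\ge0$. Both branches give $v_0$ at $x=0$, so $f$ is continuous and piecewise linear, with slope $v_0$ on $[-1,0]$ and slope $1-v_0$ on $[0,1]$. For $v_0\in(0,1)$ both slopes are at least $m:=\min\{v_0,1-v_0\}>0$. A fresh direct supporter is a new singleton class, so it changes only $v_s$. By the noisy-OR update used after Prop.~\ref{prop:response}, $v_s'=v_s+t(1-v_s)$, while $v_a$ stays fixed. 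The difference therefore moves from $x$ to $x'=x+t(1-v_s)$, and $x'>x$ because $t>0$ and $v_s<1$.

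The lower bound follows from the elementary fact that a continuous piecewise-linear function whose slopes are all at least $m$ satisfies $f(x')-f(x)\ge m(x'-x)$ for $x'\ge x$. To see this, split $[x,x']$ at $0$ if it straddles $0$, and apply each linear piece on its own segment. This gives $\Delta\sigma(c)\ge t(1-v_s)\,m>0$. The attacker case is the mirror image: $v_a'=v_a+t(1-v_a)$ and $x'=x-t(1-v_a)$, and the same slope argument gives a decrease of at least $t(1-v_a)m$. Equivalently, one can reflect $x\mapsto -x$ and swap the roles of $v_0$ and $1-v_0$, which leaves $m$ unchanged.

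For tightness I exhibit configurations where the whole interval $[x,x']$ lies on the branch with the smaller slope, so that equality holds.
\begin{itemize}
\item If $v_0\ge\tfrac12$, so $m=1-v_0$, take $v_a=0$. Then $x=v_s\ge0$, and the entire move stays on the positive branch, giving exactly $t(1-v_s)(1-v_0)$.
\item If $v_0\le\tfrac12$, so $m=v_0$, I need $x'\le0$, that is, $v_a\ge v_s+t(1-v_s)=v_s'$. With $v_a<1$ this is possible whenever $v_s'<1$, i.e.\ $t<1$. Then the move stays on the negative branch and gives exactly $t(1-v_s)v_0$.
\end{itemize}
The attacker bounds are attained by the symmetric choices.

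The main obstacle I expect is the precise meaning of ``attained''. For $t=1$ with $v_0<\tfrac12$, the difference $x'$ necessarily reaches $1-v_a>0$, so the path crosses into the steeper branch and the bound cannot be met exactly. I would therefore state attainment as existence of parameters for each fixed $v_0$, together with: the bound is tight for all $t\in(0,1)$, and for $t=1$ it is tight as an infimum, approached as $v_a\to1^-$. I would check this matches the intended reading before fixing the wording. Finally, I would define the modulus $\delta(t,v,v_0)=t(1-v)\min\{v_0,1-v_0\}$, with $v$ the aggregate on the side receiving the fresh input. This gives \QFLR in the sense of Def.~\ref{def:flr}, since under \lqdf an eligible fresh root arrives as exactly such a new direct leaf (Thm.~\ref{thm:lifting}, Cor.~\ref{cor:consistent}).
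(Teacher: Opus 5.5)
Your proposal is correct and follows essentially the same route as the paper's proof. Both write the score as a continuous piecewise-linear $\phi(d)$ in $d=v_s-v_a$ with slopes $v_0,1-v_0\ge m$, use noisy-OR to shift $d$ by exactly $t(1-v_s)$ (resp.\ $-t(1-v_a)$), split at the kink, and witness tightness with a configuration that stays on the flatter branch.

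Your $t=1$ caveat is also right, and sharper than the paper's proof: the paper's attack-branch witness for $v_0<\tfrac12$ tacitly needs $t<1$, since at $t=1$ one has $v_s'=1$, so $d'=1-v_a>0$ forces part of the move onto the steeper branch. There the bound is only an infimum, approached as $v_a\to1^-$. Attainment in the weaker sense of some configuration still holds, e.g.\ $v_0\ge\tfrac12$ with $v_a=0$.
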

\begin{proof}[Proof]
Write $\sigma(c)=\phi(d)$, $d=v_s-v_a$: $\phi$ is continuous, piecewise linear with slopes $v_0,1-v_0\ge m=\min\{v_0,1-v_0\}$. The supporter shifts $d\mapsto d+\delta$, $\delta=t(1-v_s)$; any such $\phi$ satisfies $\phi(d+\delta)-\phi(d)\ge m\delta$ in every branch, including the crossing case, with equality attained at $v_0\ge\tfrac12$, $d\ge0$ (full derivation in the supplement).
\end{proof}

In practical terms, the modulus is the least effect a fresh direct report is guaranteed to have; it grows with the strength $t$ and shrinks as the same-side aggregate $v$ saturates or $v_0$ nears $0$ or $1$. The bound is tight, and its factor $1-v$ stems from noisy-OR, not from quotienting.

\paragraph{Operation separation.} Every finite interleaving of relay, paraphrase, alias resubmission, or excerpt split by any coalition is a structural replica extension and leaves $\operatorname{LQ}[\sigma]$ and every threshold decision unchanged (Thm.~\ref{thm:lifting}(1)); every eligible fresh contribution strictly moves the score in its sign, directly (Thm.~\ref{thm:characterization}) or through a sign-consistent path (Thm.~\ref{thm:propagate}). This is the graph-level form of Cor.~\ref{cor:repproof}, with the scope limits of Sec.~\ref{sec:discussion}.

\input{temp/07_Exp_New}

\section{Discussion}
\label{sec:discussion}
 
\textit{(1) Supplied Information and Dependence Modeling.}
The lineage partition is an input, not an inferred output. Thm.~\ref{thm:information} requires only that \emph{some} interface signal separate a replica from its fresh counterpart, and building and authenticating that signal, including deciding how documents decompose into roots, is upstream work (Asm.~\ref{as:root}). A partition also cannot express graded dependence, which needs probabilistic models as in the pooling and jury-theorem literature (Sec.~\ref{sec:related}). In the direct same-sign case, false splits inflate aggregation and false merges suppress it, and quantifying this under realistic error rates remains open.
\textit{(2) Content Representation.}
Keying by root and role blocks excerpt splitting but removes canonical content from aggregation, so roots differ only through their represented strengths. Content-sensitive weighting must enter as an explicit attribute, like source reliability.
\textit{(3) Strategic and Graph Scope.}
Cor.~\ref{cor:repproof} gives replication resistance under authenticated roots, not dominant-strategy truthfulness or false-name-proofness, and it excludes forged identities, strategic misreports, strength revelation, and genuinely new evidence. Extending propagation (Thm.~\ref{thm:propagate}) beyond \dfquad{} remains open.

\section{Conclusion}
\label{sec:conclusion}
Visible agreement can reflect replication or corroboration, and a judge handles it correctly only if its representation separates further renderings of a represented root from distinct roots. Under \RRI, replication cannot change the verdict in the judgment game and is strictly dominated whenever it is costly, whereas occurrence-sensitive and content-keyed judges admit profitable deviations (Ex.~\ref{ex:deviation}). By Thm.~\ref{thm:information}, no interface that represents a replica and a fresh root identically can be both replica-neutral and strictly fresh-responsive. The lineage quotient is the coarsest representation sufficient for every deterministic \RRI--\LNI judge on an orbit-connected domain. Lifted to QBAFs, a fresh root's effect propagates strictly along sign-consistent paths with the tight \dfquad{} modulus $t(1-v)\min\{v_0,1-v_0\}$. Exhaustive and AVeriTeC-derived audits confirm the predicted separation under a supplied partition.
Future work will use graded, estimated dependence, measure sensitivity to false splits and merges, test the information boundary on prompted LLM judges, and audit excerpt splits and intermediate-node interventions at scale.

\bibliographystyle{ACM-Reference-Format}
\bibliography{references}

\appendix
\section*{Appendices}
This supplement contains only material that is omitted or abbreviated in the main paper. It gives the deferred technical proofs for the lineage quotient and its QBAF lifting, works out the counterexamples and boundary cases used by the main claims, and records the experimental details needed to reproduce the structural and AVeriTeC audits. Definitions, theorem statements, and empirical conclusions already given in the main paper are not repeated unless they are needed locally for a proof.

Numbering such as ``Thm.~2'' or ``Sec.~7.2'' refers to the main paper. We use the same notation throughout. The supplement has three roles. Sec.~\ref{sec:profile-proofs} fills the proof steps omitted from the profile-level representation results. Sec.~\ref{sec:qbaf-proofs} gives the deferred QBAF and DF-QuAD derivations, including the cancellation witness required to delimit Thm.~6. Sec.~\ref{sec:empirical-details} gives only the experimental construction, preprocessing, uncertainty estimation, and sensitivity analyses, which are not fully detailed in the main paper.

\section{Proof Details}
\label{sec:profile-proofs}

\subsection{Why Absorption Fixes the Root Aggregate}
The main paper uses an absorptive within-root aggregate and states that this uniquely gives the maximum. The following proposition supplies the omitted argument.

\begin{proposition}[Absorption forces the maximum]
\label{sprop:absorption}
Let \(\rootagg\) be defined on finite nonempty multisets of \([0,1]\). Assume \(\rootagg\{t\}=t\) and
\[
 t\le \rootagg V
 \quad\Longrightarrow\quad
 \rootagg(V\multisetplus\{t\})=\rootagg V.
\]
Then \(\rootagg V=\max V\) for every finite nonempty \(V\). Conversely, \(\max\) satisfies the singleton condition and absorption.
\end{proposition}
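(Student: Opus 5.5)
We argue by induction on the size \(|V|\) of the multiset, building \(V\) up one element at a time in a carefully chosen order so that the absorption hypothesis applies at every step. The base case \(|V|=1\) is the singleton condition \(\rootagg\{t\}=t=\max\{t\}\).

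For the inductive step, let \(V\) have \(n\ge 2\) elements. Order its elements so that a maximal element comes first: write \(V=\{m\}\multisetplus W\) with \(m=\max V\). We then build the chain
\[
V_1=\{m\},\qquad V_{k+1}=V_k\multisetplus\{w_k\},
\]
where \(w_1,\ldots,w_{n-1}\) enumerate \(W\). We claim that \(\rootagg V_k=m\) for every \(k\). This holds at \(k=1\) by the singleton condition. If it holds at \(k\), then \(w_k\le m=\rootagg V_k\), so absorption gives \(\rootagg V_{k+1}=\rootagg V_k=m\). Since \(V_n=V\) as multisets, we get \(\rootagg V=m=\max V\).

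The subtle point is that \(\rootagg\) is a function of the multiset alone. The absorption axiom is stated for \(V\multisetplus\{t\}\), and different insertion orders produce the same multiset, so no order-independence needs to be proved separately. The order is only a device for reaching \(V\); the value \(\rootagg V\) is determined by \(V\) itself. What makes the argument work is choosing to insert the maximum first, because absorption only ever applies to elements at most the current aggregate. An arbitrary insertion order could stall at an element that exceeds the current value, where absorption says nothing.

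For the converse, \(\max\{t\}=t\) holds trivially. If \(t\le\max V\), then \(\max(V\multisetplus\{t\})=\max(\max V,t)=\max V\), so \(\max\) satisfies absorption. We expect no real obstacle here; the only care needed in the forward direction is the ordering choice described above.
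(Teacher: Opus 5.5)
Your proof is correct and is essentially the paper's argument: both build \(V\) by appending elements that never exceed the running aggregate, which therefore stays equal to \(\max V\) by absorption. The paper phrases this as a size induction that removes a minimum element and applies the hypothesis to the remainder, whereas you insert the maximum first and then absorb the remaining elements in any order, which is the same idea unrolled.
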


\begin{proof}
Induct on \(|V|\). The singleton case is immediate. Suppose the claim holds for multisets of size \(n\), and let \(|V|=n+1\). Choose one minimum element \(t\in V\) and write \(W=V\setminus\{t\}\). By the induction hypothesis, \(\rootagg W=\max W\). Since \(t\le\max W=\rootagg W\), absorption gives
\[
 \rootagg V
 =\rootagg(W\multisetplus\{t\})
 =\rootagg W
 =\max W
 =\max V.
\]
The converse follows directly from the definition of \(\max\).
\end{proof}

The singleton and absorption conditions are both needed. For example, the arithmetic mean is symmetric and idempotent on equal inputs, but appending a weaker rendering can lower the aggregate. Such a rule would let replication change an already represented root. Proposition~\ref{sprop:absorption} excludes this effect and justifies the use of \(\rho_P(\upsilon)=\max\{\tau(a):\uR(a)=\upsilon\}\).

\subsection{Quotient Completeness}
Under the proof below, a \emph{reversal} deletes an occurrence only when the remaining profile still contains the same root--role unit with the same root strength. Thus, a permitted reversal is exactly the inverse of a root-replica addition.

\begin{proof}[Full proof of Lemma~1 in the main paper]
First assume \(P\approxL P'\). A root-replica addition leaves the represented units and their maximum strengths unchanged by Proposition~\ref{sprop:absorption}. A permitted reversal has the same property by definition. A consistent bijection of root labels changes only the names of root atoms. Hence every permitted step preserves the isomorphism class \(\ql\), and therefore \(\ql(P)=\ql(P')\).

Conversely, assume \(\ql(P)=\ql(P')\). Choose a root-label bijection witnessing equality of the two quotient classes and apply it to one profile, so the weighted root--role records agree exactly. In each profile, consider one root--role unit at a time. Retain one occurrence of maximal strength and delete the other occurrences. Every deletion is a permitted reversal because a maximal-strength representative remains, so neither the unit nor its aggregate strength changes. This reduces the two profiles to \(C\) and \(C'\), each with one representative per root--role unit.

The representatives in \(C\) and \(C'\) have the same root, target, sign, and maximal strength, but may differ in displayed identity, wording, or canonical proposition. These fields are outside the quotient key. Replace each representative of \(C\) by its counterpart in \(C'\) using two permitted steps: first append the counterpart, which is a root replica at the existing aggregate strength, then delete the old representative. Repeating this finite operation yields \(C'\). Reversing the initial reductions restores the multiplicities of \(P'\). Hence \(P\approxL P'\).
\end{proof}

The use of a maximal-strength representative is essential. Deleting the only occurrence that attains the maximum could lower \(\rho_P\), so that deletion would not be a permitted reversal.

\subsection{Factorization and Minimal Sufficiency}

\begin{proof}[Full proof of Thm.~2 in the main paper]
Suppose first that \(J=\widetilde J\circ\ql\). A root-replica addition or reversal leaves \(\ql\) unchanged, and a bijective root renaming changes only the representative of the same quotient class. Therefore \(J\) satisfies \RRI and \LNI.

For the converse, suppose \(J\) satisfies \RRI and \LNI on an orbit-connected domain \(\mathcal D\). Let \(P,P'\in\mathcal D\) with \(\ql(P)=\ql(P')\). Lemma~1 gives a finite sequence of replica additions, permitted reversals, and root renamings from \(P\) to \(P'\). Orbit connectivity gives such a sequence inside \(\mathcal D\). \RRI makes every addition neutral. If a step deletes \(b\), then \(b\) is a root replica relative to the smaller endpoint, so \RRI at that endpoint also makes the reversal neutral. \LNI makes every root renaming neutral. Thus \(J(P)=J(P')\).

Define \(\widetilde J(\ql(P))=J(P)\). The preceding argument makes this definition independent of the chosen representative. It is unique because every point of \(\ql(\mathcal D)\) has a representative in \(\mathcal D\). Hence \(J=\widetilde J\circ\ql\).
\end{proof}

\begin{proof}[Full proof of Thm.~3 in the main paper]
Let \(\repr\) be invariant-complete on the fixed orbit-connected domain \(\mathcal D\). We first show that
\[
 \repr(P)=\repr(P')
 \quad\Longrightarrow\quad
 \ql(P)=\ql(P').
\]
Assume otherwise. Fix \(P\) and define the quotient-class indicator
\[
 J_P(R)=\mathbf 1[\ql(R)=\ql(P)].
\]
Every root-replica step and every root renaming preserves \(\ql\), so \(J_P\) satisfies \RRI and \LNI. Invariant-completeness therefore makes \(J_P\) measurable with respect to \(\repr\). But if \(\repr(P)=\repr(P')\) while \(\ql(P)\ne\ql(P')\), measurability would require \(J_P(P)=J_P(P')\), whereas the two values are \(1\) and \(0\). This is a contradiction.

We can therefore define \(g\) on \(\repr(\mathcal D)\) by \(g(\repr(P))=\ql(P)\). The implication above makes \(g\) well defined, and \(\ql|_{\mathcal D}=g\circ\repr\). Thm.~2 shows that \(\ql\) is itself invariant-complete, so no strictly coarser representation can support the full deterministic \RRI--\LNI judge class on the same domain.
\end{proof}

\paragraph{Why a particular judge can retain less.}
The minimality statement is relative to the full deterministic \RRI--\LNI class. On a direct-star QBAF, \dfquad depends only on
\[
 \repr_{\mathrm{agg}}(P)=
 \left(
 v_0,
 1-\!\!\prod_{s(\upsilon)=+}(1-\rho_P(\upsilon)),
 1-\!\!\prod_{s(\upsilon)=-}(1-\rho_P(\upsilon))
 \right).
\]
This statistic is strictly coarser than \(\ql\). For example, supporter multisets \(\{0.2,0.2,0.8\}\) and \(\{0.2,0.6,0.6\}\) have different lineage quotients but the same noisy-OR residual, \(0.8\cdot0.8\cdot0.2=0.8\cdot0.4\cdot0.4=0.128\), and therefore the same support aggregate \(0.872\). The statistic is sufficient for this one judge, but it cannot recover arbitrary quotient-class indicators. This is the restricted-family qualification referenced after Thm.~3 in the main paper.

\subsection{Admissibility of Replication-only Deviations}
The proof of Corollary~1 uses the fact that inserting or deleting replication actions does not invalidate the remaining trace. This follows from the action definitions.

\begin{lemma}[Replication actions are admissibility-neutral]
Let \(\alpha_i'\) be a replication-only deviation from \(\alpha_i\). Every non-replication action admissible in the original trace remains admissible in the deviated trace, and conversely.
\end{lemma}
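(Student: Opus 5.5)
The plan is to reduce admissibility to two profile-level quantities that an action's legality can depend on: the set of represented units $\uR(P)$ (with their root strengths $\rho_P$), and the represented roots $\Lambda(P)$. Replication actions leave both invariant. I then argue by induction along the trace that every prefix of the deviated trace and the corresponding prefix of the original trace agree on these quantities. Admissibility of each non-replication action can then be checked identically in both traces.

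First I record the admissibility conditions from Sec.~\ref{sec:game} and Def.~\ref{def:game}.
\begin{itemize}
\item (O0) is always admissible.
\item (O1) requires a label in $E_i\setminus\Lambda(P)$.
\item (O2)--(O5) require a unit already in $\uR(P)$ and, for (O5), $\tau\le\rho_P(\upsilon)$. Together with (O1), these require label $\lambda\in E_i\cup\Lambda(P)$.
\item (O6) requires a unit in $\uR(P)$ with $\tau>\rho_P(\upsilon)$.
\end{itemize}
So the admissibility of any action at profile $P$ is a function of $(\Lambda(P),\uR(P),\rho_P)$ together with the acting agent's fixed endowment $E_i$ and identities $H_i$. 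Second, I show that an $\mathcal R$-action maps $P$ to $P\multisetplus\{b\}$ with $\uR(b)\in\uR(P)$ and $\tau(b)\le\rho_P(\uR(b))$. This is a root-replica extension in the sense of Def.~\ref{def:matched}. By absorption of $\max$ (Prop.~\ref{sprop:absorption}), it leaves $\uR$, $\rho$ and hence $\Lambda$ unchanged. For (O2)--(O4) the strength equals that of an existing occurrence of the unit, so $\tau\le\rho_P$ automatically.

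Third, the induction. Fix the common sequence $\beta$ obtained by deleting $\mathcal R$-actions from both traces (for a coalition, apply this for each $i\in M$). Each non-replication action in $\beta$ is applied at some prefix profile $P$ in the original trace and $P'$ in the deviated trace. These two prefixes contain exactly the same non-replication occurrences, in the same order, and differ only by replication occurrences. By the second step, inserting or deleting those replication occurrences does not change $(\Lambda,\uR,\rho)$. So by induction on the position in $\beta$, $(\Lambda,\uR,\rho)(P)=(\Lambda,\uR,\rho)(P')$, and the non-replication action is admissible in one iff in the other. The induction also needs the replication actions of the deviated trace to themselves be admissible. Each one references a unit that, by the invariant, is present in the corresponding original prefix iff present in the deviated one. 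I take the deviation as given to be a valid action list, so only non-replication admissibility needs proof. The converse direction is the same argument with the roles of the two traces swapped.

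The main obstacle is the interleaving bookkeeping. Replication actions deleted in the deviation may have been inserted between actions of other agents, so I must make precise what the "corresponding prefix" is. I would handle this by indexing prefixes by the number of non-replication actions executed so far. I would also note that the invariant holds at \emph{every} intermediate replication step, not just at non-replication positions. This makes the correspondence independent of where the replication actions are scheduled.
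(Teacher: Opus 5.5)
Your proposal is correct and follows the paper's argument. Admissibility of (O0), (O1) and (O6) depends only on $\Lambda(P)$, $\uR(P)$, $\rho_P$ and the acting agent's endowment, and every $\mathcal R$-action is a root-replica step that leaves these unchanged by absorption of $\max$. Your induction over prefixes, indexed by the number of non-replication actions, just spells out the paper's one-line conclusion that the same non-replication actions remain admissible in the same relative order.

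One harmless overstatement: (O2)--(O4) also refer to the content, strength and identity of an existing occurrence, so not \emph{every} action's admissibility is a function of $(\Lambda,\uR,\rho)$. You only use that claim for the non-replication actions, where it holds.
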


\begin{proof}
An action in \(\mathcal R=\{\mathrm{O2},\ldots,\mathrm{O5}\}\) can be taken only after its root--role unit is already represented, and its strength is at most the current root strength. It therefore creates no new root, creates no new root--role unit, and does not change any \(\rho_P\). The admissibility conditions of the non-replication actions depend only on the represented roots, represented units, current root strengths, and the acting agent's root endowment. Inserting or deleting actions from \(\mathcal R\) changes none of these quantities. Hence the same non-replication actions remain admissible in the same relative order.
\end{proof}

\section{Deferred QBAF and DF-QuAD Derivations}
\label{sec:qbaf-proofs}

\subsection{Acyclicity after Lineage Quotienting}
The main paper states that lineage stratification is sufficient for the quotient graph to remain acyclic.

\begin{lemma}[Acyclicity under lineage stratification]
If an occurrence QBAF is lineage-stratified, its lineage quotient is acyclic.
\end{lemma}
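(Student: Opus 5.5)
The plan is to use the rank function $r$ witnessing lineage stratification, and show that it induces a well-defined rank $\bar r$ on the quotient that strictly increases along every quotient edge. Acyclicity then follows, since a directed cycle $[a_0]\to[a_1]\to\cdots\to[a_k]=[a_0]$ would give $\bar r([a_0])<\bar r([a_1])<\cdots<\bar r([a_0])$, which is impossible.

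First, define $\bar r$ on the quotient nodes. Claim nodes in $C$ are never merged, so for $n\in C$ we set $\bar r(n)=r(n)$. For an occurrence class $[a]$ we set $\bar r([a])=r(a)$. This is independent of the representative, because stratification requires $r(a)=r(b)$ whenever $a\equivL b$.

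Second, check the edges. By construction, $Q^o/{\equivL}$ has an edge from $[x]$ to $[y]$ (of either sign) only if some raw edge $(x',y')$ of $Q^o$ has $x'\equivL x$ and $y'\equivL y$. Removing duplicate induced edges does not create new pairs, so this characterization is unaffected. For such an edge, stratification gives $r(x')<r(y')$, and hence $\bar r([x])=r(x')<r(y')=\bar r([y])$. The same argument rules out self-loops: an edge from $[x]$ to itself would require $r(x')<r(y')$ with $x'\equivL y'$, contradicting $r(x')=r(y')$.

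The only step I expect to need care is confirming that every quotient edge really comes from a raw edge between members of the two classes. This should follow from the definition of the quotient as the graph induced under $q_0$ (extended by the identity on $C$). Quotient compatibility is not needed for acyclicity; it matters only for the semantics being well defined.
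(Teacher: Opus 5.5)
Your proposal is correct and takes the same approach as the paper. Both arguments push the stratifying rank $r$ down to a well-defined $\bar r$ on quotient classes, using $r(a)=r(b)$ for $a\equivL b$. Both then note that every quotient edge is induced by a raw edge and so strictly increases $\bar r$, which makes a directed cycle yield the contradiction $\bar r(z_1)<\cdots<\bar r(z_1)$. Your explicit handling of claim nodes as singleton classes and of self-loops only spells out cases that the same argument already covers.
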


\begin{proof}
Let \(r\) be the rank function from the definition of lineage stratification. Equivalent occurrence nodes have the same rank, so each quotient class \([a]_R\) has a well-defined rank \(\bar r([a]_R)=r(a)\). Every quotient edge is induced by a raw edge, and every raw edge strictly increases \(r\). Hence every quotient edge strictly increases \(\bar r\). A directed quotient cycle would imply
\[
 \bar r(z_1)<\bar r(z_2)<\cdots<\bar r(z_k)<\bar r(z_1),
\]
which is impossible.
\end{proof}

Lineage stratification is sufficient but not necessary. An acyclic raw graph can fail this condition and can even acquire a quotient cycle. For example, if \(a_1\to b\to a_2\) with \(a_1\equivL a_2\), collapsing \(a_1\) and \(a_2\) creates a directed cycle in the quotient.

\subsection{Full Case Analysis for Proposition~4}
Write \(d=v_s-v_a\). The DF-QuAD topic score can be written as:
\[
 \phi(d)=
 \begin{cases}
 v_0(1+d), & d\le 0,\\
 v_0+(1-v_0)d, & d\ge 0.
 \end{cases}
\]
The two branches meet at \(d=0\), and their slopes are \(v_0\) and \(1-v_0\).

\begin{proof}[Full proof of Proposition~4 in the main paper]
Add a direct supporter of strength \(t\in(0,1]\) when \(v_s<1\). Noisy-OR changes the support aggregate by
\[
 \delta=v_s'-v_s=t(1-v_s)>0.
\]
If \(v_s'\le v_a\), both scores are on the attack-dominant branch and
\[
 \phi(d+\delta)-\phi(d)=v_0\delta>0.
\]
If \(v_a\le v_s\), both are on the support-dominant branch and
\[
 \phi(d+\delta)-\phi(d)=(1-v_0)\delta>0.
\]
If \(v_s<v_a<v_s'\), the intervention crosses the branch point. Then
\[
 \phi(d+\delta)-\phi(d)
 =(1-v_0)(v_s'-v_a)+v_0(v_a-v_s)>0.
\]
The boundary cases set one of the two terms to zero. Thus, every eligible direct supporter strictly increases the score. The attacker statement is symmetric because increasing \(v_a\) shifts \(d\) in the negative direction, and both branch slopes with respect to \(v_a\) are strictly negative.
\end{proof}

\subsection{Full Induction for Theorem 6}
Throughout this subsection, \(Q\) is an acyclic quotient QBAF evaluated by \dfquad. For a node \(n\), let \(v_s(n)\) and \(v_a(n)\) be its noisy-OR support and attack aggregates. Then
\[
 \sigma(n)=\phi\!\left(\tau(n),v_s(n)-v_a(n)\right),
\]
where:
\[
 \phi(\tau,\Delta)=
 \begin{cases}
 \tau(1+\Delta), & \Delta\le 0,\\
 \tau+(1-\tau)\Delta, & \Delta\ge 0.
 \end{cases}
\]

\begin{lemma}[Non-degeneracy]
\label{slem:nondegenerate}
If every node has base strength in \((0,1)\), then every evaluated node score also lies in \((0,1)\).
\end{lemma}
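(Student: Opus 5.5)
We argue by induction over a topological order of the acyclic quotient $Q$, which exists because $Q$ is acyclic. The claim at each node $n$ is that $\sigma(n)\in(0,1)$. Every node is evaluated only after all of its parents, that is, its supporters and attackers. We first show that the aggregates at $n$ stay away from saturation, and then read the score off $\phi$.

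\emph{Aggregate step.} Fix a node $n$ and assume every parent score lies in $(0,1)$; this holds vacuously when $n$ has no parents. Noisy-OR gives $1-v_s(n)=\prod_i(1-\sigma(m_i))$ over the supporters $m_i$. Each factor lies in $(0,1)$, so the product lies in $(0,1]$, with value $1$ exactly when the product is empty. Hence $v_s(n)\in[0,1)$, and by the same argument $v_a(n)\in[0,1)$. Consequently $\Delta=v_s(n)-v_a(n)\in(-1,1)$. This is the one place where the inductive hypothesis is used: it rules out $\Delta=\pm1$, which is the only way the score could reach $0$ or $1$.

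\emph{Score step.} Set $\tau=\tau(n)\in(0,1)$. If $\Delta\le0$, then $\sigma(n)=\tau(1+\Delta)$. Here $1+\Delta\in(0,1]$, so the score lies in $(0,\tau]$, which is contained in $(0,1)$. If $\Delta\ge0$, then $\sigma(n)=\tau+(1-\tau)\Delta$. This is a convex combination of $1$ and $\tau$ with weight $\Delta<1$ on $1$, so it lies in $[\tau,1)$, again inside $(0,1)$. Leaves are covered by the case $\Delta=0$, which gives $\sigma(n)=\tau\in(0,1)$. This completes the induction.

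The only delicate points are the two strict inequalities. We need $v_s,v_a<1$, which comes from every parent score being strictly below $1$. We also need the score to stay strictly positive when $\Delta\le0$, which uses $\Delta>-1$; that in turn needs $v_a<1$, or equivalently every attacker score below $1$. Both follow from the inductive hypothesis together with the assumption that base strengths are interior. Since this is the only source of difficulty, we expect no serious obstacle.
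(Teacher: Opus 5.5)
Your proposal is correct and follows essentially the same route as the paper's proof. Both induct over a topological order, use the hypothesis on parent scores to keep both noisy-OR aggregates in \([0,1)\), and then read off \(\sigma(n)\in(0,\tau(n)]\) on the attack-dominant branch and \(\sigma(n)\in[\tau(n),1)\) on the support-dominant branch.
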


\begin{proof}
Induct over a topological order. A leaf has score \(\tau(n)\in(0,1)\). For an interior node, the induction hypothesis gives all parent scores in \((0,1)\), so both noisy-OR aggregates lie in \([0,1)\). If attacks dominate, \(\sigma(n)=\tau(n)(1+\Delta)\in(0,\tau(n)]\). If supports dominate, \(\sigma(n)=\tau(n)+(1-\tau(n))\Delta\in[\tau(n),1)\). Thus \(\sigma(n)\in(0,1)\) in either case.
\end{proof}

A useful consequence is the strict monotonicity of noisy-OR on the evaluated domain. If one or more inputs strictly increase and none decrease, with every input below one, then the aggregate strictly increases. The symmetric statement holds for decreases.

\begin{lemma}[Parity is well defined on the relevant subgraph]
\label{slem:parity}
Suppose every directed path from \(d\) to \(c\) has the same attack-edge parity. Let \(R^*\) be the set of nodes lying on at least one directed path from \(d\) to \(c\). Then each \(n\in R^*\) has a well-defined parity \(\varepsilon(n)\) from \(d\), with \(\varepsilon(d)=+1\). If \((m,n)\) is a support edge, \(\varepsilon(n)=\varepsilon(m)\). If it is an attack edge, \(\varepsilon(n)=-\varepsilon(m)\). Moreover, every in-neighbor of \(n\in R^*\) that is reachable from \(d\) also belongs to \(R^*\).
\end{lemma}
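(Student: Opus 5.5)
The plan is to define the parity through path attack-counts and read off the local rules from path extension. For $n\in R^*$, choose any directed path $\pi$ from $d$ to $n$ and any directed path $\pi'$ from $n$ to $c$. The latter exists by the definition of $R^*$. The concatenation $\pi\pi'$ is a directed walk from $d$ to $c$. Because $Q$ is acyclic, every walk is a path, so $\pi\pi'$ is a genuine directed path. I then set $\varepsilon(n)=(-1)^{\#\text{attacks in }\pi}$. The trivial path gives $\varepsilon(d)=+1$.

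The first step, and the only real obstacle, is showing that this value does not depend on $\pi$. Let $\pi_1,\pi_2$ be two paths from $d$ to $n$, and fix one path $\pi'$ from $n$ to $c$. Both $\pi_1\pi'$ and $\pi_2\pi'$ are paths from $d$ to $c$, and acyclicity is exactly what guarantees this. By hypothesis they have equal attack-edge parity. The attack count of $\pi'$ appears in both totals, so it cancels modulo $2$. Hence $\pi_1$ and $\pi_2$ have the same parity, and $\varepsilon(n)$ is well defined. I take acyclicity of $Q=Q^o/{\equivL}$ from the standing assumptions of the subsection and note it explicitly, since without it the concatenation could repeat nodes.

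Next come the edge rules. Let $(m,n)$ be an edge with $m,n\in R^*$. Take a path $\pi$ from $d$ to $m$ and extend it by the edge $(m,n)$. This gives a walk from $d$ to $n$, and by acyclicity it is a path, since $n$ cannot already lie on $\pi$: otherwise $n\to\cdots\to m\to n$ would be a cycle. Its attack count equals that of $\pi$ plus one if $(m,n)$ is an attack edge, and plus zero if it is a support edge. By well-definedness, $\varepsilon(n)=\varepsilon(m)$ for support edges and $\varepsilon(n)=-\varepsilon(m)$ for attack edges. The statement as written takes $m\in R^*$ for granted, and the last claim supplies this.

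Finally, closure. Let $m$ be an in-neighbor of $n\in R^*$ that is reachable from $d$. A path from $d$ to $m$, followed by the edge $(m,n)$ and then a path from $n$ to $c$, is a walk from $d$ to $c$ through $m$. By acyclicity it is a path, so $m\in R^*$. This also justifies applying the edge rules to every in-neighbor that lies on the relevant subgraph. These are exactly the facts the rank induction of Thm.~\ref{thm:propagate} will use: in-neighbors of $n$ outside $R^*$ are either unreachable from $d$, so their scores are unchanged, or they cannot reach $c$, which is impossible since they reach $n$. All changed inputs therefore carry parities consistent with $\varepsilon(n)$.
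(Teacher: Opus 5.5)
Your proof is correct and follows the same route as the paper: you get well-definedness by appending a common $n$-to-$c$ suffix and cancelling its parity modulo $2$, the edge rules by extending a path by one edge, and closure by concatenating $d\to m\to n\to c$. You also explicitly use acyclicity of $Q$ to ensure these concatenated walks are genuine paths, so that the hypothesis on $d$-to-$c$ paths applies; the paper leaves this detail implicit in its standing assumption.
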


\begin{proof}
If two paths from \(d\) to some \(n\in R^*\) had opposite parity, append a common suffix from \(n\) to \(c\). This would produce two \(d\)-to-\(c\) paths of opposite parity, contradicting the hypothesis. Hence \(\varepsilon(n)\) is well defined. Appending one support edge preserves parity, and appending one attack edge reverses it. For the final claim, concatenate a path from \(d\) to the in-neighbor, its edge to \(n\), and a path from \(n\) to \(c\).
\end{proof}

\begin{proof}[Full proof of Thm.~6 in the main paper]
Add the fresh leaf to \(d\) with polarity \(\varepsilon_0\in\{+1,-1\}\). Let \(R\) be the set of nodes reachable from \(d\). Nodes outside \(R\) are unchanged: by topological induction, none has a change in neighbors, so its aggregates and score remain fixed.

Now induct over a topological order of \(R^*\). We prove that every \(n\in R^*\) changes strictly with sign \(\varepsilon_0\varepsilon(n)\). At \(n=d\), Lemma~\ref{slem:nondegenerate} keeps the relevant pre-intervention aggregate below one. Proposition~4 therefore gives a strict change with sign \(\varepsilon_0\), which equals \(\varepsilon_0\varepsilon(d)\).

Let \(n\ne d\) and assume the claim for all earlier nodes in \(R^*\). Every changed in-neighbor of \(n\) lies in \(R^*\) by Lemma~\ref{slem:parity}, and at least one changed in-neighbor exists because \(n\) lies on a positive-length path from \(d\). A changed supporter \(m\) has \(\varepsilon(m)=\varepsilon(n)\), so its score changes with sign \(\varepsilon_0\varepsilon(n)\). A changed attacker has \(\varepsilon(m)=-\varepsilon(n)\), so its score changes with the opposite sign. Strict monotonicity of noisy-OR therefore makes \(v_s(n)\) move in sign \(\varepsilon_0\varepsilon(n)\) whenever a support input changes, while \(v_a(n)\) moves in the opposite sign whenever an attack input changes. The two channels consequently reinforce the difference \(v_s(n)-v_a(n)\). This difference changes strictly with sign \(\varepsilon_0\varepsilon(n)\).

By Lemma~\ref{slem:nondegenerate}, both slopes of \(\phi(\tau(n),\cdot)\) are strictly positive. The score \(\sigma(n)\) therefore changes with the same strict sign. Taking \(n=c\) gives the sign \(\varepsilon_0\varepsilon(c)\), which is exactly the sign stated in Thm.~6.
\end{proof}

\subsection{Why Sign Consistency cannot be Removed}
The main paper states that opposite-parity paths can cancel exactly. The following worked example gives the promised witness.

\begin{example}[Exact cancellation on a diamond]
Let \(c,m_1,m_2,d\) all have base strength \(\tfrac12\). Add edges
\[
 d\xrightarrow{+}m_1,\qquad
 d\xrightarrow{+}m_2,\qquad
 m_1\xrightarrow{+}c,\qquad
 m_2\xrightarrow{-}c,
\]
and let one report \(r_1\) of strength \(\tfrac12\) support \(d\). Initially,
\[
 \sigma(d)=\tfrac34,
 \qquad
 \sigma(m_1)=\sigma(m_2)=\tfrac78,
 \qquad
 \sigma(c)=\tfrac12.
\]
Now add a fresh supporter \(r_2\) of \(d\), also of strength \(\tfrac12\). The support aggregate at \(d\) becomes \(\tfrac34\), so
\[
 \sigma(d)=\tfrac78,
 \qquad
 \sigma(m_1)=\sigma(m_2)=\tfrac{15}{16}.
\]
Both intermediate scores increase strictly, but the topic still receives equal support and attack aggregates. Hence \(\sigma(c)=\tfrac12\) exactly as before. The two paths from \(d\) to \(c\) have opposite parity, so the zero topic response is caused by cancellation rather than saturation.
\end{example}

\subsection{Full Derivation of the Tight Response Modulus}

\begin{proof}[Full proof of Thm.~7 in the main paper]
Consider a fresh direct supporter of strength \(t\). Let
\[
 m=\min\{v_0,1-v_0\},
 \qquad
 \delta=t(1-v_s)>0,
 \qquad
 d=v_s-v_a.
\]
The intervention maps \(d\) to \(d+\delta\). If \(d+\delta\le0\), the entire change lies on the attack-dominant branch and
\[
 \Delta\sigma=v_0\delta\ge m\delta.
\]
If \(d\ge0\), the entire change lies on the support-dominant branch and
\[
 \Delta\sigma=(1-v_0)\delta\ge m\delta.
\]
If \(d<0<d+\delta\), split the displacement at zero:
\[
 \Delta\sigma
 =v_0(-d)+(1-v_0)(d+\delta).
\]
The two nonnegative displacement lengths sum to \(\delta\), and each is multiplied by a slope at least \(m\). Hence \(\Delta\sigma\ge m\delta\). Therefore
\[
 \Delta\sigma
 \ge t(1-v_s)\min\{v_0,1-v_0\}.
\]
The attacker case follows by the symmetric shift \(d\mapsto d-t(1-v_a)\).

The bound is tight. If \(v_0\ge\tfrac12\), then \(m=1-v_0\), and any support-dominant configuration that remains on that branch attains equality. If \(v_0\le\tfrac12\), an attack-dominant configuration that remains on the attack branch attains equality. The attacker's construction is symmetric.
\end{proof}

\paragraph{Exact minimum on the direct-star grid.}
For a selected supporter of strength \(t\) in a supporter multiset \(V\),
\[
 t(1-v_s)
 =t(1-t)\!\prod_{v\in V\setminus\{t\}}(1-v).
\]
On \(G=\{0.2,0.4,0.6,0.8\}\), the first factor is at least \(0.16\). With at most two other supporters, the remaining product is at least \(0.2^2=0.04\). Thus \(t(1-v_s)\ge0.0064\). Also \(\min\{v_0,1-v_0\}\ge0.2\), so the modulus is at least \(0.00128\). Equality is realised, for example, at \(v_0=0.8\), supporter strengths \((0.2,0.8,0.8)\), attacker strength \(0.2\), and intervention strength \(t=0.2\). This proves the analytic minimum reported next to Table~2 in the main paper.

\subsection{Two Additional Boundary Calculations}

\paragraph{Effect of exact split and merge errors.}
Suppose \(m>1\) direct same-sign units all have strength \(t\in(0,1)\). If one true root is falsely represented as \(m\) roots, noisy-OR changes its aggregate contribution from \(t\) to
\[
 1-(1-t)^m>t.
\]
A false split therefore inflates the same-sign aggregate. Merging \(m\) truly distinct equal-strength roots into one reverses the change and suppresses the aggregate. This is the direct same-sign calculation referenced in Sec.~8 of the main paper. It does not quantify errors in mixed-sign or multi-layer graphs.

\paragraph{Computational cost after root keys are available.}
With expected constant-time hashing, quotient construction takes expected
\[
 O(|O|+|R^-|+|R^+|)
\]
time and linear space. Each occurrence is hashed once to its root--role class while maintaining a running maximum, and each signed raw edge is mapped once to quotient endpoints with duplicate edges removed. Topological \dfquad evaluation is linear in the quotient size. Testing sign consistency is also linear in the subgraph reachable from the intervention node, using a two-coloring by attack parity. These bounds exclude provenance extraction, root discovery, and root authentication, which are outside the model.

\section{Additional Experimental Specification}
\label{sec:empirical-details}
The main paper reports the audit outcomes. This section records only construction and analysis details that are needed for exact reproduction or that are not shown in the main tables.

\subsection{Structural Audits}

\paragraph{Audit implementation and the theoretical quotient.}
In the direct-star and AVeriTeC scripts, the implementation labeled ``lineage quotient'' retains one representative per \((\kappa,\lambda,s)\) key before \dfquad. On every profile evaluated in these two audits, this induces the same classes as the theoretical root--role quotient \(\ql\): all reports target the same topic, every clean report has a distinct root, and both matched interventions keep \(\kappa\) fixed. The equivalence is therefore exact on the evaluated intervention set. The audits do not test excerpt splitting, where \(\kappa\) changes while the root and role remain fixed. That separation is established analytically by Example~1 and Proposition~2 in the main paper.

\paragraph{Direct-star enumeration.}
Let \(G=\{0.2,0.4,0.6,0.8\}\). The topic-based strength and every leaf strength range over \(G\), with one to three supporters and one to three attackers. The number of clean profiles is
\[
 |G|
 \left(\sum_{n_s=1}^{3}|G|^{n_s}\right)
 \left(\sum_{n_a=1}^{3}|G|^{n_a}\right)
 =4(4+16+64)^2
 =28{,}224.
\]
The number of report positions is
\[
 4\sum_{n_s=1}^{3}\sum_{n_a=1}^{3}
 (n_s+n_a)4^{n_s+n_a}
 =153{,}216.
\]
For a selected report position, the eightfold-replica intervention adds seven copies of that report, so the profile contains eight occurrences of the selected report. The matched fresh intervention adds one report with the same content, role, and strength but a new controlled root. No sampling is used in this audit.

\paragraph{Multi-layer enumeration}
Three typed-QBAF families (main paper, Sec.~7.2) are fully enumerated. A chain has one report leaf, one intermediate claim, and the topic. Its three node strengths range over \(\{0.2,0.5,0.8\}\), and its two edges have independent signs, giving \(3^3 2^2=108\) graphs. A fork has two report leaves feeding one intermediate claim and then the topic, giving \(3^4 2^3=648\) graphs. A diamond has one report leaf feeding two intermediate claims that both feed the topic, giving \(3^4 2^4=1{,}296\) graphs. Hence, the audit contains \(2{,}052\) graphs in total.

Each report leaf is replicated once while preserving its complete signed neighborhood. This gives \(108+2(648)+1{,}296=2{,}700\) replicated positions. Each graph also receives one fresh direct supporter and one fresh direct attacker of strength \(0.4\), giving \(2(2{,}052)=4{,}104\) fresh interventions. Intermediate-node fresh-root interventions are not part of this empirical audit.
\begin{table*}[h]
\caption{AVeriTeC audit with claim-clustered 95\% bootstrap intervals. MinG is the observed finite-position minimum, and \(n_F\) is the judge-specific FASR denominator.}
\label{tab:real-full}
\centering
\resizebox{\textwidth}{!}{%
\begin{tabular}{lccccccc}
\toprule
Judge & CD & RSS$_8$ & RFR$_8$ & FLG & MinG & FASR & $n_F$\\
\midrule
Vote & .01867 [.01408,.02373] & .05724 [.04294,.07196] & .07032 [.05292,.08873] & .02194 [.01648,.02766] & .00000 & 1.000 [1.000,1.000] & 50\\
Occurrence \dfquad & .00000 [.00000,.00000] & .01137 [.00894,.01399] & .06892 [.05114,.08733] & .00880 [.00710,.01069] & $1.13\!\times\!10^{-8}$ & .980 [.933,1.000] & 50\\
Max-QuAD & .02785 [.02546,.03035] & .00000 [.00000,.00000] & .00000 [.00000,.00000] & .00000 [.00000,.00000] & .00000 & .000 [.000,.000] & 61\\
Content quotient & .00028 [.00000,.00075] & .00000 [.00000,.00000] & .00000 [.00000,.00000] & .00000 [.00000,.00000] & .00000 & .000 [.000,.000] & 50\\
URL quotient & .00020 [.00000,.00060] & .00000 [.00000,.00000] & .00000 [.00000,.00000] & .00000 [.00000,.00000] & .00000 & .000 [.000,.000] & 50\\
\textbf{Lineage quotient} & \textbf{.00000 [.00000,.00000]} & \textbf{.00000 [.00000,.00000]} & \textbf{.00000 [.00000,.00000]} & .00880 [.00710,.01066] & \textbf{$1.13\!\times\!10^{-8}$} & \textbf{.000 [.000,.000]} & 50\\
\bottomrule
\end{tabular}%
}
\end{table*}
\subsection{AVeriTeC Profile Construction}
We use the AVeriTeC train and development sets \cite{schlichtkrull2023averitec}. We retain Supported and Refuted claims that have non-empty, non-Unanswerable QA evidence linked to a source URL, and require at least two usable reports per claim. Before stance filtering, these rules give \(2{,}226\) eligible claims and \(7{,}051\) candidate QA reports.

For each report, the NLI premise contains the question and answer, followed by the explanation when one is available. The hypothesis is the claim. A frozen \texttt{facebook/bart-large-mnli} model \cite{lewis2020bart,williams2018mnli} maps entailment to support, contradiction to attack, and neutral to discard. A non-neutral report is retained when its predicted class probability is at least \(0.60\). That probability is used as the report strength after clipping to \([0.05,0.95]\). The topic base strength is \(0.5\), and model inputs are truncated to \(1{,}024\) tokens.

The model-assigned stance labels and the resulting profiles were independently cross-checked by two human verifiers. Within each gold AVeriTeC label, eligible claims are ordered by a fixed pseudorandom permutation with seed \(2027\), then truncated to at most \(150\) claims after requiring at least two retained reports. Only \(127\) Supported claims satisfy the final filters. The final audit therefore contains \(277\) profiles, with \(127\) Supported and \(150\) Refuted claims, and \(711\) reports in total.

\begin{table}[t]
\caption{AVeriTeC profile statistics after the stated filters.}
\label{tab:real-flow}
\centering
\footnotesize
\setlength{\tabcolsep}{3pt}
\begin{tabular}{lr@{\quad}lr}
\toprule
Statistic & Value & Statistic & Value\\
\midrule
Claims & 277 & Supported / Refuted & 127 / 150\\
Reports & 711 & Support / attack & 152 / 559\\
Reports per claim & 2.57 & Mixed-sign claims & 47\\
Content-Q changed & 3 & URL-Q changed & 2\\
Fresh positions & 711 & FASR-eligible claims & 47\\
\bottomrule
\end{tabular}
\end{table}

\subsection{Controlled Lineage, Canonicalization and Uncertainty}
AVeriTeC does not provide evidential-root identifiers. We therefore assign each original QA report a distinct controlled root. The replica intervention adds seven exact copies of the selected report and preserves its content, URL, sign, strength, and root. The matched fresh intervention preserves the same judged fields but assigns a new root. The experiment consequently evaluates aggregation under supplied lineage rather than provenance discovery.

Canonical content is obtained by Unicode NFKC normalization case folding, and whitespace normalization of the QA content. The URL quotient collapses reports only when canonical content and sign agree, and the source URLs have the same canonical form. URL canonicalization lower-cases the host, removes \texttt{www.} and common tracking parameters, normalizes ports and slashes, and sorts the remaining query parameters.

Confidence intervals in Table~\ref{tab:real-full} use \(10{,}000\) claim-clustered bootstrap resamples stratified by AVeriTeC label. All interventions for a single claim are resampled together. MinG is the observed minimum over evaluated report positions rather than a bootstrap estimate. FASR uses the judge-specific eligible set.

\subsection{Sensitivity Analyses}
The lineage-quotient audit is repeated at NLI thresholds of \(0.50\) and \(0.70\), and every retained report is assigned a fixed strength of \(0.5\). Each sensitivity condition uses \(2{,}000\) claim-clustered bootstrap resamples. CD, RSS$_8$, RFR$_8$, and FASR are zero in every row, so Table~\ref{tab:real-sensitivity} reports only the quantities that vary.

\begin{table}[h]
\caption{Sensitivity of the AVeriTeC lineage-quotient audit. FLG is reported with a claim-clustered 95\% bootstrap interval.}
\label{tab:real-sensitivity}
\centering
\footnotesize
\setlength{\tabcolsep}{2.5pt}
\begin{tabular}{lrrrr}
\toprule
Condition & Claims & Pos. & FLG [95\% CI] & MinG\\
\midrule
NLI thr. .50 & 277 & 737 & .00833 [.00670,.01018] & $1.13{\times}10^{-8}$\\
NLI thr. .60 & 277 & 711 & .00880 [.00710,.01066] & $1.13{\times}10^{-8}$\\
NLI thr. .70 & 237 & 601 & .00655 [.00510,.00819] & $1.36{\times}10^{-8}$\\
Fixed str. .50 & 277 & 711 & .05132 [.04746,.05526] & .001953\\
\bottomrule
\end{tabular}
\end{table}

Across these conditions, the controlled replica remains exactly neutral, and every evaluated fresh root has a positive signed gain. The sensitivity analysis therefore checks that the reported separation is not specific to the \(0.60\) NLI threshold. It does not evaluate the NLI model as a fact checker.

\subsection{Reproduction and Scope Notes}
The AVeriTeC audit is determined by the dataset splits, the frozen NLI model, the filters and stance mapping above, the confidence threshold, strength clipping, topic-based strength, claim-selection rule, controlled root assignment, matched interventions, judge definitions, canonicalization rules, and bootstrap procedure.

Three scope restrictions matter when interpreting the audit. First, the root partition is supplied by construction because native AVeriTeC has no evidential-root labels. Second, only \(47\) profiles contain both support and attack reports, which limits the denominator of opposing-report multiplicity attacks. Third, both matched interventions keep canonical content fixed, so these experiments do not empirically separate the content-keyed quotient from the root--role lineage quotient under excerpt splitting. That distinction is established analytically in the main paper.

\section{AI-Assisted Preparation Disclosure}
In this work, we used generative AI tools solely to polish the manuscript's writing. All AI-assisted text was reviewed and revised by the authors. We take responsibility for the final content of this work, including any text produced with the aid of generative AI.

\end{document}

%% file: fig/fig_overview.tex
% =====================================================================
%  Figure 1 -- overview (ML-conference style, full width: figure*)
%  Requires: tikz with libraries arrows.meta, calc  (already in main.tex)
%  Usage in main.tex:
%     \begin{figure*}[t]
%       \centering
%       \input{fig_overview}
%       \caption{...}\label{fig:overview}
%     \end{figure*}
% =====================================================================
\definecolor{ink}{HTML}{2B2D42}
\definecolor{mute}{HTML}{6B7080}
\definecolor{panelbg}{HTML}{F6F7F9}
\definecolor{panelln}{HTML}{DADDE3}
\definecolor{rootA}{HTML}{4C72B0}   % evidential root l1
\definecolor{rootB}{HTML}{DD8452}   % evidential root l2
\definecolor{rootC}{HTML}{8172B2}   % evidential root l3
\definecolor{good}{HTML}{2A9D8F}    % lineage quotient / correct behaviour
\definecolor{bad}{HTML}{C44E52}     % failure / occurrence judge
\definecolor{neutralc}{HTML}{8C8F99}% content quotient
\begingroup

\begin{tikzpicture}[
  x=1cm, y=1cm,
  font=\fontsize{6.2}{7.4}\selectfont,
  >={Latex[length=1.5mm,width=1.1mm]},
  panel/.style={fill=panelbg, draw=panelln, line width=0.4pt, rounded corners=4pt},
  ttl/.style={anchor=west, text=ink, font=\bfseries\fontsize{7.6}{9}\selectfont},
  num/.style={circle, fill=ink, text=white, inner sep=0pt, minimum size=3.5mm,
              font=\bfseries\fontsize{6.5}{7}\selectfont},
  small/.style={font=\fontsize{5.4}{6.4}\selectfont, text=mute},
  chip/.style={draw=#1, fill=#1!9, line width=0.55pt, rounded corners=2pt,
               minimum width=1.45cm, minimum height=0.31cm, text width=1.15cm,
               align=left, inner xsep=2.6pt, inner ysep=0.5pt, anchor=west, text=ink},
  tag/.style={fill=#1, text=white, rounded corners=1.2pt, inner sep=0.8pt,
              minimum height=0.21cm, font=\bfseries\fontsize{5.2}{6}\selectfont},
  agent/.style={circle, draw=ink!55, fill=white, line width=0.45pt, inner sep=0pt,
                minimum size=0.29cm, font=\fontsize{5}{6}\selectfont, text=ink},
  claim/.style={draw=ink!70, fill=ink!8, line width=0.55pt, rounded corners=2pt,
                minimum width=0.46cm, minimum height=0.42cm, font=\fontsize{7}{8}\selectfont},
  flow/.style={->, line width=0.5pt, draw=#1},
  pill/.style={draw=panelln, fill=white, rounded corners=2.5pt, text width=#1,
               align=left, anchor=south west, inner xsep=4.2pt, inner ysep=2.4pt,
               text=ink, font=\fontsize{5.9}{7.1}\selectfont,
               execute at begin node={\hyphenpenalty=10000\relax\exhyphenpenalty=10000\relax}},
  accent/.style={path picture={\fill[#1] (path picture bounding box.south west)
               rectangle ([xshift=1.7pt]path picture bounding box.north west);}},
  hdr/.style={align=center, text=ink, font=\bfseries\fontsize{6}{6.8}\selectfont},
  mchip/.style={draw=#1, fill=#1!9, line width=0.5pt, rounded corners=1.4pt,
                minimum width=0.62cm, minimum height=0.2cm, inner sep=0pt},
  mtag/.style={fill=#1, text=white, rounded corners=0.8pt, inner sep=0.5pt,
               minimum height=0.13cm, font=\bfseries\fontsize{4.3}{5}\selectfont},
  gbox/.style={draw=ink!55, fill=panelbg, line width=0.5pt, rounded corners=1.5pt,
               inner sep=1.8pt, text=ink, font=\fontsize{6}{7}\selectfont},
  cond/.style={draw=#1!60, fill=#1!10, line width=0.45pt, rounded corners=1.5pt,
               inner sep=1.8pt, text=ink, font=\fontsize{5.6}{6.5}\selectfont},
  act/.style={minimum width=0.29cm, minimum height=0.2cm, rounded corners=1.2pt, inner sep=0pt,
               line width=0.45pt, text=ink, font=\bfseries\fontsize{4.6}{5}\selectfont},
  gn/.style={circle, draw=ink!60, fill=white, line width=0.5pt, inner sep=0pt, minimum size=0.25cm,
             font=\fontsize{5.2}{6}\selectfont, text=ink},
  cell/.style={minimum width=0.56cm, minimum height=0.15cm, inner sep=0pt, text=ink,
               font=\fontsize{4.7}{5}\selectfont},
  pics/doc/.style={code={
      \fill[#1!14] (-0.15,-0.19) -- (-0.15,0.19) -- (0.06,0.19) -- (0.15,0.10) -- (0.15,-0.19) -- cycle;
      \draw[#1, line width=0.55pt, line join=round]
           (-0.15,-0.19) -- (-0.15,0.19) -- (0.06,0.19) -- (0.15,0.10) -- (0.15,-0.19) -- cycle;
      \draw[#1, line width=0.45pt, line join=round] (0.06,0.19) -- (0.06,0.10) -- (0.15,0.10);
      \draw[#1!75, line width=0.4pt] (-0.09,0.05) -- (0.08,0.05) (-0.09,-0.03) -- (0.08,-0.03)
           (-0.09,-0.11) -- (0.03,-0.11);}},
  pics/ok/.style={code={
      \fill[good] (0,0) circle (0.125);
      \draw[white, line width=0.85pt, line cap=round, line join=round]
           (-0.058,0.002) -- (-0.016,-0.046) -- (0.062,0.052);}},
  pics/no/.style={code={
      \fill[bad] (0,0) circle (0.125);
      \draw[white, line width=0.85pt, line cap=round]
           (-0.045,-0.045) -- (0.045,0.045) (-0.045,0.045) -- (0.045,-0.045);}},
]

% card: #1,#2 = lower-left, #3,#4 = upper-right, #5 (unused, kept for compatibility), #6 title, #7 reference
\def\ovcard#1#2#3#4#5#6#7{%
  \fill[white, rounded corners=2.5pt] (#1,#2) rectangle (#3,#4);
  \draw[panelln, line width=0.4pt, rounded corners=2.5pt] (#1,#2) rectangle (#3,#4);
  \node[anchor=north west, inner sep=0pt, text=ink, font=\bfseries\fontsize{6.2}{7}\selectfont]
       at ($(#1,#4)+(0.17,-0.08)$) {#6};
  \node[anchor=north east, inner sep=0pt, text=mute, font=\fontsize{5.4}{6.4}\selectfont]
       at ($(#3,#4)+(-0.1,-0.09)$) {#7};
}
% ---------------------------------------------------------------------
% Panel frames
% ---------------------------------------------------------------------
\def\H{6.8}
\fill[panel] (0,0)    rectangle (6.0,\H);
\fill[panel] (6.3,0)  rectangle (11.9,\H);
\fill[panel] (12.2,0) rectangle (17.8,\H);
\draw[panel, fill=none] (0,0)    rectangle (6.0,\H);
\draw[panel, fill=none] (6.3,0)  rectangle (11.9,\H);
\draw[panel, fill=none] (12.2,0) rectangle (17.8,\H);

% =====================================================================
% (a) Same visible agreement, different evidence
% =====================================================================
\begin{scope}[shift={(0,0)}]
  \node[num] at (0.32,6.45) {a};
  \node[ttl] at (0.52,6.45) {Same agreement, different evidence};

  % ---- row labels
  \node[rotate=90, font=\bfseries\fontsize{6.2}{7}\selectfont, text=ink] at (0.26,5.1) {Replication};
  \node[rotate=90, font=\bfseries\fontsize{6.2}{7}\selectfont, text=ink] at (0.26,3.25) {Corroboration};

  % ---- Replication row: one root, three renderings
  \pic at (0.92,5.1) {doc=rootA};
  \node[font=\fontsize{6.5}{7}\selectfont, text=rootA] at (0.92,4.72) {$\ell_1$};
  \foreach \y/\a/\t in {5.55/$A_1$/report, 5.1/$A_2$/relay, 4.65/$A_2'$/excerpt}{
     \node[agent] (ag) at (1.42,\y) {\a};
     \node[chip=rootA] (ch) at (1.62,\y) {\t};
     \node[tag=rootA, anchor=east] at ([xshift=-1.2pt]ch.east) {$\ell_1$};
     \draw[flow=rootA!80] (1.09,5.1) -- (ag.west);
  }
  \node[claim] (cR) at (3.82,5.1) {$c$};
  \foreach \y/\d in {5.55/2.6pt, 5.1/0pt, 4.65/-2.6pt}
     \draw[flow=ink!55] (3.07,\y) -- ([yshift=\d]cR.west);
  \node[anchor=west, align=left, text=ink] at (4.14,5.1)
       {\textbf{1 root}\\[-0.5pt]{\color{mute}should add}\\[-0.5pt]{\color{mute}no weight}};
  \node[small, anchor=west] at (0.62,4.24)
       {O2--O5: relay, paraphrase, alias, excerpt split \,$\cdot$\, cost $c^r\!<c^f$};

  % ---- Corroboration row: three roots
  \foreach \y/\a/\r/\l in {3.7/$A_1$/rootA/$\ell_1$, 3.25/$A_2$/rootB/$\ell_2$, 2.8/$A_3$/rootC/$\ell_3$}{
     \pic at (0.92,\y) {doc=\r};
     \node[agent] (ag) at (1.42,\y) {\a};
     \node[chip=\r] (ch) at (1.62,\y) {report};
     \node[tag=\r, anchor=east] at ([xshift=-1.2pt]ch.east) {\l};
     \draw[flow=\r!80] (1.09,\y) -- (ag.west);
  }
  \node[claim] (cC) at (3.82,3.25) {$c$};
  \foreach \y/\d in {3.7/2.6pt, 3.25/0pt, 2.8/-2.6pt}
     \draw[flow=ink!55] (3.07,\y) -- ([yshift=\d]cC.west);
  \node[anchor=west, align=left, text=ink] at (4.14,3.25)
       {\textbf{3 roots}\\[-0.5pt]{\color{mute}should}\\[-0.5pt]{\color{mute}reinforce}};
  \node[small, anchor=west] at (0.62,2.39)
       {O1: fresh evidential root \,$\cdot$\, cost $c^f$};

  % ---- card: information boundary (Thm 1), drawn as a diagram
  \ovcard{0.15}{0.15}{5.85}{2.08}{bad}{Information boundary}{Thm.~\ref{thm:information}}
  % matched pair: P_R (replica of l1) and P_F (fresh root l2)
  \node[font=\fontsize{6.6}{7}\selectfont, text=ink] at (0.45,1.3) {$P_R$};
  \node[mchip=rootA] (r1) at (1.05,1.42) {}; \node[mtag=rootA, anchor=east] at ([xshift=-1pt]r1.east) {$\ell_1$};
  \node[mchip=rootA] (r2) at (1.05,1.18) {}; \node[mtag=rootA, anchor=east] at ([xshift=-1pt]r2.east) {$\ell_1$};
  \node[font=\fontsize{6.6}{7}\selectfont, text=ink] at (5.3,1.3) {$P_F$};
  \node[mchip=rootA] (f1) at (4.65,1.42) {}; \node[mtag=rootA, anchor=east] at ([xshift=-1pt]f1.east) {$\ell_1$};
  \node[mchip=rootB] (f2) at (4.65,1.18) {}; \node[mtag=rootB, anchor=east] at ([xshift=-1pt]f2.east) {$\ell_2$};
  % interface image: root labels stripped -> identical
  \node[fill=panelbg, draw=panelln, rounded corners=1.5pt, minimum width=0.95cm, minimum height=0.6cm] (img) at (2.85,1.3) {};
  \foreach \yy in {1.42,1.18} \fill[ink!25, rounded corners=1pt] (2.54,\yy-0.07) rectangle (3.16,\yy+0.07);
  \draw[->, ink!55, line width=0.5pt] (1.42,1.3) -- node[above, inner sep=1pt, font=\fontsize{6}{6}\selectfont] {$\varphi$} (img.west);
  \draw[->, ink!55, line width=0.5pt] (4.28,1.3) -- node[above, inner sep=1pt, font=\fontsize{6}{6}\selectfont] {$\varphi$} (img.east);
  \node[font=\fontsize{5.8}{6.5}\selectfont, text=ink] at (2.85,0.9) {$\varphi(P_R)=\varphi(P_F)$};
  \node[small] at (2.85,0.69) {holds for every observation interface $\varphi$};
  % the two demands that cannot both hold
  \node[cond=good, anchor=west, font=\fontsize{5.1}{6}\selectfont] (cl) at (0.32,0.38) {RRI: $S(P_R){=}S(P)$};
  \node[cond=good, anchor=east, font=\fontsize{5.1}{6}\selectfont] (cr) at (5.72,0.38) {FLR: $S(P_F){>}S(P)$};
  \pic at (2.47,0.38) {no};
  \node[anchor=west, inner sep=0pt, text=bad, font=\bfseries\fontsize{5.6}{6.5}\selectfont] at (2.64,0.38) {not both};
  \draw[bad!70, line width=0.45pt, dash pattern=on 1.2pt off 0.9pt] (cl.east) -- (2.33,0.38);
  \draw[bad!70, line width=0.45pt, dash pattern=on 1.2pt off 0.9pt] (3.3,0.38) -- (cr.west);
\end{scope}

% =====================================================================
% (b) What must the judge retain?
% =====================================================================
\begin{scope}[shift={(6.3,0)}]
  \node[num] at (0.32,6.45) {b};
  \node[ttl] at (0.52,6.45) {What must the judge retain?};

  % column headers
  \node[hdr] at (3.30,5.82) {Replica};
  \node[hdr] at (4.15,5.82) {Excerpt\\split};
  \node[hdr] at (5.00,5.82) {Fresh\\root};
  \node[small, anchor=west, font=\itshape\fontsize{5.4}{6.4}\selectfont] at (0.18,5.32) {desired effect on score};
  \foreach \x/\t in {3.30/ignore, 4.15/ignore, 5.00/count}
     \node[small, font=\itshape\fontsize{5.4}{6.4}\selectfont] at (\x,5.32) {\t};
  \draw[panelln, line width=0.5pt] (0.18,5.13) -- (5.42,5.13);

  % highlight band for the lineage quotient
  \fill[good!11, rounded corners=2pt] (0.1,2.93) rectangle (5.5,3.47);
  \draw[good!60, line width=0.5pt, rounded corners=2pt] (0.1,2.93) rectangle (5.5,3.47);

  % rows: name, key, marks
  \foreach \y/\n/\k/\ma/\mb/\mc in {
      4.85/Occurrence multiset/every report kept/no/no/ok,
      4.30/Content quotient $q_C$/{key $(\kappa,p,s)$}/ok/no/no,
      3.75/Content-keyed $q_K$/{key $(\kappa,\lambda,p,s)$}/ok/no/ok}{
     \node[anchor=west, align=left, text=ink] at (0.18,\y)
          {\textbf{\n}\\[-0.6pt]{\color{mute}\fontsize{5.4}{6.2}\selectfont\k}};
     \pic at (3.30,\y) {\ma};
     \pic at (4.15,\y) {\mb};
     \pic at (5.00,\y) {\mc};
  }
  \node[anchor=west, align=left, text=ink] at (0.18,3.2)
       {\textbf{Lineage quotient $q_L$} {\color{good}\textbf{(ours)}}\\[-0.6pt]{\color{mute}\fontsize{5.4}{6.2}\selectfont key $(\lambda,p,s)$, strength $\max\tau$}};
  \pic at (3.30,3.2) {ok};
  \pic at (4.15,3.2) {ok};
  \pic at (5.00,3.2) {ok};

  \node[small, anchor=west, align=left] at (0.18,2.6)
       {\tikz[baseline=-0.6ex]\pic{no}; in replica or excerpt $\Rightarrow$ profitable replication (Prop.~\ref{prop:profitable})\\[0.8pt]
        \tikz[baseline=-0.6ex]\pic{no}; in fresh root $\Rightarrow$ corroboration erased (Prop.~\ref{prop:contentcollapse})};

  % ---- quotient mini-illustration
  \begin{scope}[shift={(0,2.0)}]
    \node[small, anchor=west] at (0.18,0.0) {$P$:};
    \fill[rootA] (0.60,0) circle (0.072);
    \fill[rootA] (0.81,0) circle (0.072);
    \fill[rootA] (1.02,-0.07) -- (1.095,0.07) -- (0.945,0.07) -- cycle; % excerpt
    \fill[rootA] (1.23,0) circle (0.072);
    \fill[rootB] (1.44,0) circle (0.072);
    \draw[->, line width=0.7pt, good] (1.64,0) -- (2.26,0)
         node[midway, above=-0.5pt, font=\fontsize{6}{6}\selectfont, text=good] {$q_L$};
    \node[fill=rootA!15, draw=rootA, rounded corners=1.5pt, inner sep=1.2pt,
          font=\fontsize{5.2}{6}\selectfont, text=ink] at (2.76,0) {unit\,$\cdot$\,$\max\tau$};
    \node[fill=rootB!15, draw=rootB, rounded corners=1.5pt, inner sep=1.2pt,
          font=\fontsize{5.2}{6}\selectfont, text=ink] at (3.74,0) {unit\,$\cdot\,\tau$};
    \node[small, anchor=west, align=left] at (4.14,0) {distinct roots\\[-1pt]stay apart};
    \node[small, anchor=west] at (0.46,-0.28) {drops wording, identity, multiplicity \& root names};
  \end{scope}

  % ---- card: factorization + minimality (Thms 2-3), drawn as a commutative diagram
  \ovcard{0.15}{0.15}{5.45}{1.52}{good}{Coarsest sufficient key}{Thms.~\ref{thm:factor}--\ref{thm:minimal}}
  \node[gbox] (nP) at (0.62,0.92) {$P$};
  \node[gbox, draw=good, fill=good!12] (nQ) at (2.3,0.92) {$q_L(P)$};
  \node[gbox] (nS) at (3.5,0.92) {$[0,1]$};
  \node[gbox, draw=ink!40, dash pattern=on 1.5pt off 1pt] (nF) at (2.3,0.37) {$\varphi(P)$};
  \draw[->, good, line width=0.7pt] (nP) -- node[above, inner sep=1pt, font=\fontsize{5.8}{6}\selectfont] {$q_L$} (nQ);
  \draw[->, ink!70, line width=0.55pt] (nQ) -- node[above, inner sep=1pt, font=\fontsize{5.8}{6}\selectfont] {$\widetilde J$} (nS);

  \draw[->, ink!45, line width=0.5pt, dash pattern=on 1.5pt off 1pt] (nP.south east) -- node[below left, inner sep=0pt, font=\fontsize{5.8}{6}\selectfont] {$\varphi$} (nF.west);
  \draw[->, good, line width=0.55pt] (nF.north) -- node[right, inner sep=2pt, font=\fontsize{5.8}{6}\selectfont] {$g$} (nQ.south);
  \node[anchor=west, align=left, text=ink, font=\fontsize{5.8}{6.8}\selectfont] at (3.9,0.94)
       {$J=\widetilde J\circ q_L$\\[-1pt]{\color{mute}\fontsize{5}{6}\selectfont every RRI\,+\,LNI $J$}};
  \node[anchor=west, align=left, text=ink, font=\fontsize{5.8}{6.8}\selectfont] at (2.9,0.43)
       {$q_L=g\circ\varphi$ \;{\color{mute}\fontsize{5}{6}\selectfont $\forall$ sufficient $\varphi$}};
\end{scope}

% =====================================================================
% (c) Lifting to argumentation graphs
% =====================================================================
\begin{scope}[shift={(12.2,0)}]
  \node[num] at (0.32,6.45) {c};
  \node[ttl] at (0.52,6.45) {Lifting to argumentation graphs};

  % ---- occurrence QBAF  ->  quotient QBAF
  \node[anchor=south, inner sep=1pt, text=ink, font=\bfseries\fontsize{5.8}{6.6}\selectfont] at (1.35,5.92) {occurrence QBAF $Q^o$};
  \node[anchor=south, inner sep=1pt, text=ink, font=\bfseries\fontsize{5.8}{6.6}\selectfont] at (4.3,5.92) {quotient $Q^o/{\equiv_R}$};
  \fill[white, draw=panelln, line width=0.4pt, rounded corners=2.5pt] (0.2,3.42) rectangle (2.55,5.9);
  \fill[white, draw=good!70, line width=0.5pt, rounded corners=2.5pt] (3.1,3.42) rectangle (5.45,5.9);
  \draw[->, good, line width=0.9pt] (2.62,4.66) -- (3.03,4.66);
  \node[anchor=south, inner sep=1pt, text=good, font=\fontsize{6}{6.5}\selectfont] at (2.82,4.7) {$/{\equiv_R}$};
  \node[anchor=north, inner sep=1pt, text=mute, font=\fontsize{4.8}{5.5}\selectfont, align=center] at (2.82,4.6) {merge\\root--role\\classes};

  % left graph
  \begin{scope}[shift={(0,0)}]
    \node[claim, minimum width=0.36cm, minimum height=0.3cm, font=\fontsize{6}{6.5}\selectfont] (c) at (1.45,5.42) {$c$};
    \node[claim, minimum width=0.36cm, minimum height=0.3cm, font=\fontsize{6}{6.5}\selectfont] (d) at (0.62,4.55) {$d$};
    \node[circle, fill=rootA!30, draw=rootA, line width=0.5pt, inner sep=0pt, minimum size=0.21cm] (o1) at (1.18,4.25) {};
    \node[circle, fill=rootA!30, draw=rootA, line width=0.5pt, inner sep=0pt, minimum size=0.21cm] (o2) at (1.5,4.25) {};
    \node[inner sep=0pt, minimum size=0.22cm] (o3) at (1.82,4.25) {};
    \fill[rootA!30, draw=rootA, line width=0.5pt, line join=round] (1.82,4.36) -- (1.93,4.16) -- (1.71,4.16) -- cycle;
    \node[circle, fill=rootB!30, draw=rootB, line width=0.5pt, inner sep=0pt, minimum size=0.21cm] (o4) at (2.28,4.55) {};
    \node[circle, fill=rootC!30, draw=rootC, line width=0.5pt, inner sep=0pt, minimum size=0.21cm] (o5) at (0.45,3.75) {};
    \node[circle, fill=rootC!30, draw=rootC, line width=0.5pt, inner sep=0pt, minimum size=0.21cm] (o6) at (0.79,3.75) {};
    \foreach \o in {o1,o2,o3} \draw[->, good, line width=0.5pt] (\o) -- (c);
    \draw[->, good, line width=0.5pt] (d) -- (c);
    \draw[->, bad, line width=0.5pt] (o4) -- (c);
    \draw[->, good, line width=0.5pt] (o5) -- (d);
    \draw[->, good, line width=0.5pt] (o6) -- (d);
    \node[small, anchor=north, inner sep=1pt, text=rootA] at (1.5,4.1) {$\ell_1\times3$};
    \node[small, anchor=north, inner sep=1pt, text=rootB] at (2.28,4.41) {$\ell_2$};
    \node[small, anchor=west, inner sep=1pt, text=rootC] at (0.92,3.75) {$\ell_3\times2$};
  \end{scope}

  % right graph
  \begin{scope}[shift={(2.9,0)}]
    \node[claim, minimum width=0.36cm, minimum height=0.3cm, font=\fontsize{6}{6.5}\selectfont] (c) at (1.45,5.42) {$c$};
    \node[claim, minimum width=0.36cm, minimum height=0.3cm, font=\fontsize{6}{6.5}\selectfont] (d) at (0.62,4.55) {$d$};
    \node[circle, fill=rootA!30, draw=rootA, line width=0.7pt, inner sep=0pt, minimum size=0.27cm] (u1) at (1.5,4.25) {};
    \node[circle, fill=rootB!30, draw=rootB, line width=0.5pt, inner sep=0pt, minimum size=0.21cm] (u2) at (2.28,4.55) {};
    \node[circle, fill=rootC!30, draw=rootC, line width=0.7pt, inner sep=0pt, minimum size=0.27cm] (u3) at (0.62,3.75) {};
    \draw[->, good, line width=0.6pt] (u1) -- (c);
    \draw[->, good, line width=0.5pt] (d) -- (c);
    \draw[->, bad, line width=0.5pt] (u2) -- (c);
    \draw[->, good, line width=0.6pt] (u3) -- (d);
    \node[small, anchor=west, inner sep=1pt, text=rootA] at (1.6,4.13) {$\bar\tau{=}\max\tau$};
    \node[small, anchor=north, inner sep=1pt, text=rootB] at (2.28,4.41) {$\ell_2$};
    \node[small, anchor=west, inner sep=1pt, text=rootC] at (0.8,3.75) {$\ell_3$};
  \end{scope}

  % legend
  \begin{scope}[shift={(0.25,3.2)}]
    \draw[->, good, line width=0.5pt] (0,0) -- (0.3,0);
    \node[small, anchor=west, text=ink, inner sep=1.5pt] (lg1) at (0.3,0) {support};
    \draw[->, bad, line width=0.5pt] ([xshift=5pt]lg1.east) -- ++(0.3,0) node[small, anchor=west, text=ink, inner sep=1.5pt] (lg2) {attack};
    \node[claim, minimum width=0.24cm, minimum height=0.2cm, inner sep=0pt, anchor=west] (lg3) at ([xshift=6pt]lg2.east) {};
    \node[small, anchor=west, text=ink, inner sep=1.5pt] (lg4) at (lg3.east) {claim, never merged};
    \fill[ink!25] ([xshift=7pt]lg4.east) circle (0.07);
    \node[small, anchor=west, text=ink, inner sep=1.5pt] at ([xshift=10pt]lg4.east) {report};
  \end{scope}

  % ---- generic lifting, as an equation with its three properties
  \ovcard{0.15}{1.9}{5.45}{2.98}{good}{Generic lifting}{Thm.~\ref{thm:lifting}}
  \node[text=ink, font=\fontsize{7.2}{8}\selectfont] at (2.8,2.52) {$\mathrm{LQ}[\sigma](Q^o)\;=\;\sigma\bigl(Q^o/{\equiv_R}\bigr)$};
  \node[cond=good, font=\fontsize{5}{6}\selectfont] at (1.05,2.13) {structural RRI};
  \node[cond=good, font=\fontsize{5}{6}\selectfont] at (2.35,2.13) {LNI};
  \node[cond=good, font=\fontsize{5}{6}\selectfont] at (3.95,2.13) {$=\sigma$ on singleton classes};

  \begin{scope}[yshift=-1.87cm]
  % ---- card: replication-proofness (Prop 1, Cor 1) as a trace edit
  \ovcard{0.15}{2.02}{2.58}{3.62}{good}{Replication-proof}{}
  \node[small, anchor=north west, inner sep=0pt] at (0.32,3.29) {Prop.~\ref{prop:repproof} $\cdot$ Cor.~\ref{cor:repproof}};
  \node[act, draw=rootA, fill=rootA!18] at (0.42,2.95) {O1};
  \node[act, draw=ink!40, fill=ink!6, dash pattern=on 1pt off 0.8pt] at (0.72,2.95) {O2};
  \node[act, draw=ink!40, fill=ink!6, dash pattern=on 1pt off 0.8pt] at (1.02,2.95) {O5};
  \node[act, draw=rootB, fill=rootB!18] at (1.32,2.95) {O1};
  \draw[->, ink!60, line width=0.5pt] (1.5,2.95) -- node[above, inner sep=0.6pt, font=\fontsize{4.4}{5}\selectfont, text=mute] {drop} (1.84,2.95);
  \node[act, draw=rootA, fill=rootA!18] at (2.03,2.95) {O1};
  \node[act, draw=rootB, fill=rootB!18] at (2.33,2.95) {O1};
  \pic at (0.44,2.4) {ok};
  \node[anchor=west, align=left, inner sep=0pt, text=ink, font=\fontsize{5.3}{6.1}\selectfont] at (0.62,2.4)
       {same verdict\\at every $\theta$};
  % utility bars: dropping replication raises U_i by k c^r
  \fill[ink!28] (1.9,2.16) rectangle (2.08,2.34);
  \fill[good]   (2.18,2.16) rectangle (2.36,2.5);
  \draw[ink!60, line width=0.35pt] (1.84,2.16) -- (2.42,2.16);
  \node[anchor=south, inner sep=0.8pt, text=good, font=\fontsize{5.2}{6}\selectfont] at (2.27,2.5) {$+kc^r$};

  % ---- card: fresh roots still count (Thms 6-7) as a signed path
  \ovcard{2.7}{2.02}{5.45}{3.62}{rootA}{Fresh roots count}{}
  \node[small, anchor=north west, inner sep=0pt] at (2.87,3.29) {Thms.~\ref{thm:propagate}--\ref{thm:modulus}};
  \node[gn, draw=rootB, fill=rootB!25] (t) at (3.05,2.95) {$t$};
  \node[gn] (d) at (3.72,2.95) {$d$};
  \node[gn] (e) at (4.42,2.95) {$e$};
  \node[claim, minimum width=0.32cm, minimum height=0.28cm, font=\fontsize{5.8}{6}\selectfont] (cc) at (5.12,2.95) {$c$};
  \draw[->, good, line width=0.55pt] (t) -- node[above, inner sep=0.6pt, font=\bfseries\fontsize{6}{6}\selectfont] {$+$} (d);
  \draw[->, bad,  line width=0.55pt] (d) -- node[above, inner sep=0.6pt, font=\bfseries\fontsize{6}{6}\selectfont] {$-$} (e);
  \draw[->, bad,  line width=0.55pt] (e) -- node[above, inner sep=0.6pt, font=\bfseries\fontsize{6}{6}\selectfont] {$-$} (cc);
  \node[small, inner sep=0pt, font=\fontsize{5}{6}\selectfont] at (3.72,2.71) {sign-consistent path};
  \node[inner sep=0pt, text=good, font=\fontsize{5.2}{6}\selectfont] at (5.12,3.25) {$\Delta\sigma\!>\!0$};
  \node[fill=rootA!8, draw=rootA!45, rounded corners=1.4pt, inner sep=1pt, text=ink,
        font=\fontsize{4.6}{6}\selectfont] at (4.07,2.33) {$\Delta\sigma\ge t(1{-}v)\min\{v_0,1{-}v_0\}$};
  \end{scope}
\end{scope}

\end{tikzpicture}
\endgroup

%% file: temp/07_Exp_New.tex
\section{Experiments and Results}
\label{sec:audits}

Our experiments test whether a judge ignores repeated evidence while responding strictly to a matched fresh root. Each audit starts with one root per report and applies the paired interventions of Def.~\ref{def:matched}. We measure score and verdict changes under repetition, signed fresh-root gains, and preservation of clean scores. Supplied lineage isolates aggregation behavior.

\subsection{A Toy Debate}
\label{sec:toy}
 
\begin{figure}[b]
\centering
\input{fig/toy_illus}
\caption{Only the lineage quotient separates replicas from fresh roots. Toy debate of Ex.~\ref{ex:deviation} under \dfquad{} ($v_0=.5$, $\tau=.2$, $\theta=.5$), one added report per column. Cells give the topic score, colored by whether the judge meets the target (replicas neutral, fresh roots move the score in their sign).}
\Description{An argumentation graph with one supporter and one attacker of a claim, beside a table of topic scores for six judges under four single-report interventions: relay, excerpt, fresh attacker, and fresh supporter.}
\label{fig:toy}
\end{figure}
 
Fig.~\ref{fig:toy} replays the profile of Ex.~\ref{ex:deviation}: one supporter from root $\ell_1$ and one attacker from $\ell_2$, both of strength $0.2$, with topic base $0.5$, so every judge scores $0.5$ and accepts at $\theta=0.5$. We then add a single report. A relay of $\ell_2$ lowers occurrence \dfquad{} to $0.42$ and reverses the verdict, while the lineage quotient stays at $0.5$ (\RRI). A matched attacker from a fresh root $\ell_3$ gives occurrence \dfquad{} the same $0.42$: without lineage, the replica and the fresh root are indistinguishable, as Thm.~\ref{thm:information} requires. The lineage quotient separates them. It falls to $0.42$ for the fresh attacker and rises to $0.58$ for a matched fresh supporter, and both changes equal the bound $t(1-v)\min\{v_0,1-v_0\}=0.2\cdot0.8\cdot0.5=0.08$ of Thm.~\ref{thm:modulus}, so the modulus is attained. Max-QuAD and the content quotient ignore both fresh reports, while an excerpt of $\ell_2$ moves both content-based quotients but not the lineage quotient. The audits below test relays and fresh roots at scale, with content held fixed.

\subsection{Exhaustive Structural Audit}

We repeat the test across direct acyclic star QBAFs, where every report supports or attacks the topic. Topic and leaf strengths range over $G=\{0.2,0.4,0.6,0.8\}$, with one to three supporters and one to three attackers. Exhaustive enumeration yields 28,224 profiles, 153,216 report positions, and 306,432 matched interventions at threshold $0.5$. All strengths lie in $(0,1)$, making every fresh intervention \dfquad-eligible.

Five judges receive identical profiles: \emph{Vote} (fraction of supporting occurrences), \emph{occurrence \dfquad} (noisy-OR over all occurrences), \emph{Max-QuAD} (maximum aggregation), the \emph{content quotient}, and the \emph{lineage quotient} (\lqdf). For report $i$, let $T_8^i(P)$ add seven replicas, giving eight occurrences in total, and $F^i(P)$ add one matched fresh-root report. With $s_i\in\{+1,-1\}$ denoting report sign:
\begin{gather*}
\mathrm{RSS}_8
=\mathbb{E}\left|S(T_8^i(P))-S(P)\right|,
\quad
\mathrm{RFR}_8
=\mathbb{E}\,\mathbf{1}\!\left[D(T_8^i(P))\ne D(P)\right],
\\
\mathrm{FLG}
=\mathbb{E}\,s_i\left(S(F^i(P))-S(P)\right),
\quad
\mathrm{MinG}
=\min_{P,i}\,s_i\left(S(F^i(P))-S(P)\right).
\end{gather*}
RSS$_8$ and RFR$_8$ measure the mean absolute score change and verdict-flip rate under repetition. FLG is the mean signed fresh-root gain, and MinG its minimum over tested positions. FASR is the verdict-flip rate from eightfold repetition of an opposing report, on each judge's eligible set. CD is the mean absolute deviation from occurrence \dfquad{} on clean profiles.

\begin{table}[t]
\caption{Exhaustive structural audit. Target: $\mathrm{CD}=\mathrm{RSS}_8=\mathrm{RFR}_8=\mathrm{FASR}=0$ and $\mathrm{MinG}>0$. Bold marks joint satisfaction.}
\label{tab:audit}
\centering
\footnotesize
\setlength{\tabcolsep}{2.3pt}
\begin{tabular}{lcccccc}
\toprule
Judge & CD & RSS$_8$ & RFR$_8$ & FLG & MinG & FASR\\
\midrule
Vote & .1818 & .2723 & .4582 & .0752 & .0500 & 1.0000\\
Occurrence \dfquad & .0000 & .0677 & .0932 & .0298 & .0013 & .1922\\
Max-QuAD & .0448 & .0000 & .0000 & .0000 & .0000 & .0000\\
Content quotient & .0000 & .0000 & .0000 & .0000 & .0000 & .0000\\
\textbf{Lineage quotient} & \textbf{.0000} & \textbf{.0000} & \textbf{.0000} & .0298 & \textbf{.0013} & \textbf{.0000}\\
\bottomrule
\end{tabular}
\end{table}

Table~\ref{tab:audit} shows the toy illustration's two failure modes across the grid. Occurrence \dfquad{} responds strictly to fresh evidence (MinG $=.0013$), but repetition shifts its score by $.0677$ on average, flips $9.32\%$ of verdicts, and flips $19.22\%$ of 74,338 eligible opposing-report cases. Vote is more sensitive (RSS$_8=.2723$, RFR$_8=.4582$, FASR $=1$). Max-QuAD and the content quotient ignore repetition but have MinG $=0$: matched fresh evidence can have no effect. Among the tested judges, only the lineage quotient meets the joint target: CD, RSS$_8$, RFR$_8$, and FASR are zero, while FLG $=.0298$ and MinG $=.0013$. The latter is the rounded analytic minimum $0.00128$ of the qFLR modulus in Thm.~\ref{thm:modulus}.

We also exhaustively enumerate 2,052 chain, fork, and diamond DAGs with strengths in $\{0.2,0.5,0.8\}$. Across 2,700 replicated positions and 4,104 fresh interventions, the lineage lifting has zero maximum clean deviation and replication shift, with minimum signed fresh-root gain $1.28\times10^{-4}>0$. Fresh reports attach directly to the topic; intermediate-node fresh interventions are not audited.

\subsection{Real-profile Audit on AVeriTeC}
\label{sec:realaudit}

We apply the same test to source-linked question and answer evidence from Supported and Refuted AVeriTeC claims \cite{schlichtkrull2023averitec}. A frozen BART-MNLI classifier \cite{lewis2020bart,williams2018mnli} assigns support, attack, or discard, yielding 277 profiles (127 Supported, 150 Refuted) and 711 report positions. Two human verifiers independently cross-check the assigned stances and profiles. Each original report receives a distinct controlled root. Repetition preserves it, while the matched fresh report changes only the root field. We also evaluate a \emph{URL quotient}, which merges matching canonical content and sign when canonicalized URLs agree. Confidence intervals use claim-clustered bootstrap resampling. Full preprocessing and sensitivity analyses appear in the supplement.

\begin{table}[t]
\caption{Matched audit on 277 AVeriTeC-derived profiles. Same target as Table~\ref{tab:audit}.}
\label{tab:real-audit}
\centering
\footnotesize
\setlength{\tabcolsep}{1.8pt}
\begin{tabular}{lcccccc}
\toprule
Judge & CD & RSS$_8$ & RFR$_8$ & FLG & MinG & FASR\\
\midrule
Vote & .0187 & .0572 & .0703 & .0219 & .0000 & 1.0000\\
Occurrence \dfquad & .0000 & .0114 & .0689 & .0088 & $1.1\!\times\!10^{-8}$ & .9800\\
Max-QuAD & .0278 & .0000 & .0000 & .0000 & .0000 & .0000\\
Content quotient & .0003 & .0000 & .0000 & .0000 & .0000 & .0000\\
URL quotient & .0002 & .0000 & .0000 & .0000 & .0000 & .0000\\
\textbf{Lineage quotient} & \textbf{.0000} & \textbf{.0000} & \textbf{.0000} & .0088 & \textbf{$1.1\!\times\!10^{-8}$} & \textbf{.0000}\\
\bottomrule
\end{tabular}
\end{table}

Table~\ref{tab:real-audit} reproduces the same separation on evidence-derived profiles. Occurrence \dfquad{} remains fresh-responsive (FLG $=.0088$, MinG $=1.1\times10^{-8}$), but repetition shifts its score by $.0114$ on average, flips $6.89\%$ of verdicts, and succeeds in 49 of 50 eligible opposing-report cases (FASR $=.98$). Vote has positive mean fresh-root gain ($.0219$) but MinG $=0$, so mean response does not imply strict response everywhere. Max-QuAD and both content-based quotients ignore repetition but have MinG $=0$. Among the tested judges, only the lineage quotient meets the joint target: CD, RSS$_8$, RFR$_8$, and FASR are zero, and all 711 fresh contributions have positive signed gains (mean $.0088$, 95\% CI $[.0071,.0107]$). The small minimum is consistent with the $(1-v)$ saturation factor in Thm.~\ref{thm:modulus}.

%% file: fig/toy_illus.tex
% =====================================================================
%  Toy debate figure (single column). Requires tikz + arrows.meta, calc.
%  Scores are generated by toy_check.py (DF-QuAD, v0 = .5, tau = .2, theta = .5).
% =====================================================================
\definecolor{ink}{HTML}{2B2D42}%
\definecolor{mute}{HTML}{6B7080}%
\definecolor{panelbg}{HTML}{F6F7F9}%
\definecolor{panelln}{HTML}{DADDE3}%
\definecolor{rootA}{HTML}{4C72B0}%
\definecolor{rootB}{HTML}{DD8452}%
\definecolor{rootC}{HTML}{8172B2}%
\definecolor{good}{HTML}{2A9D8F}%
\definecolor{bad}{HTML}{C44E52}%
\begingroup%
\begin{tikzpicture}[
  x=1cm, y=1cm,
  font=\fontsize{6}{7}\selectfont,
  >={Latex[length=1.3mm,width=1mm]},
  small/.style={font=\fontsize{5.2}{6.2}\selectfont, text=mute},
  rep/.style={circle, draw=#1, fill=#1!28, line width=0.5pt, inner sep=0pt, minimum size=0.3cm,
              font=\fontsize{5.2}{6}\selectfont, text=ink},
  claim/.style={draw=ink!70, fill=ink!8, line width=0.55pt, rounded corners=2pt,
                minimum width=0.42cm, minimum height=0.36cm, font=\fontsize{6.5}{7}\selectfont},
  tcell/.style={minimum width=0.82cm, minimum height=0.27cm, inner sep=0pt, text=ink},
  hdr/.style={font=\bfseries\fontsize{5.5}{6.4}\selectfont, text=ink, inner sep=0pt},
  sub/.style={font=\fontsize{5}{6}\selectfont, text=mute, inner sep=0pt},
]
% ---------------- clean debate ----------------
\fill[panelbg, draw=panelln, line width=0.4pt, rounded corners=3pt] (0,0.1) rectangle (1.95,4.1);
\node[hdr, anchor=north west] at (0.14,3.97) {Clean debate};
\node[claim] (c) at (0.8,3.18) {$c$};
\node[small, anchor=west, text=ink] at (1.06,3.18) {$v_0{=}.5$};
\node[rep=rootA] (s) at (0.45,2.15) {$\ell_1$};
\node[rep=rootB] (a) at (1.5,2.15) {$\ell_2$};
\draw[->, good, line width=0.6pt] (s) -- (c);
\draw[->, bad,  line width=0.6pt] (a) -- (c);
\node[small, anchor=north] at (0.45,1.95) {$\tau{=}.2$};
\node[small, anchor=north] at (1.5,1.95) {$\tau{=}.2$};
\node[text=ink, font=\fontsize{6}{7}\selectfont] at (0.975,1.25) {$\sigma(c)=.50$};
\node[small, text=ink] at (0.975,1.0) {$\ge\theta=.5$: accept};
\draw[panelln, line width=0.4pt] (0.15,0.72) -- (1.8,0.72);
\fill[rootC!28, draw=rootC, line width=0.45pt] (0.3,0.45) circle (0.07);
\node[small, anchor=west] at (0.4,0.45) {$\ell_3$: fresh root};
\draw[->, good, line width=0.5pt] (0.2,0.22) -- (0.44,0.22);
\node[small, anchor=west] at (0.46,0.22) {sup.};
\draw[->, bad, line width=0.5pt] (1.0,0.22) -- (1.24,0.22);
\node[small, anchor=west] at (1.26,0.22) {att.};

% ---------------- column headers with glyphs of the added report ----------------
% clean
\fill[rootA!28, draw=rootA, line width=0.45pt] (4.21,3.72) circle (0.08);
\fill[rootB!28, draw=rootB, line width=0.45pt] (4.45,3.72) circle (0.08);
% relay of l2 (same text)
\node[rep=rootB, minimum size=0.22cm] (h1) at (5.21,3.62) {};
\draw[->, bad, line width=0.5pt] (h1.north) -- ++(0,0.22);
% excerpt of l2 (new text)
\fill[rootB!28, draw=rootB, line width=0.5pt, line join=round] (6.09,3.73) -- (6.195,3.54) -- (5.985,3.54) -- cycle;
\draw[->, bad, line width=0.5pt] (6.09,3.74) -- ++(0,0.2);
% fresh attacker l3
\node[rep=rootC, minimum size=0.22cm] (h3) at (6.97,3.62) {};
\draw[->, bad, line width=0.5pt] (h3.north) -- ++(0,0.22);
% fresh supporter l3
\node[rep=rootC, minimum size=0.22cm] (h4) at (7.85,3.62) {};
\draw[->, good, line width=0.5pt] (h4.north) -- ++(0,0.22);

\node[hdr] at (4.33,3.32) {clean};
\node[hdr] at (5.21,3.32) {relay $\ell_2$};
\node[hdr] at (6.09,3.32) {excerpt $\ell_2$};
\node[hdr] at (6.97,3.32) {fresh $\ell_3$};
\node[hdr] at (7.85,3.32) {fresh $\ell_3$};
\node[sub] at (5.21,3.12) {same text};
\node[sub] at (6.09,3.12) {new text};
\node[sub] at (6.97,3.12) {attacker};
\node[sub] at (7.85,3.12) {supporter};
\draw[panelln, line width=0.5pt] (2.1,2.95) -- (8.3,2.95);
\node[anchor=west, inner sep=0pt, font=\itshape\fontsize{5.2}{6}\selectfont, text=mute] at (2.12,2.74) {target};
\foreach \x/\t in {5.21/{$=.50$}, 6.09/{$=.50$}, 6.97/{$<.50$}, 7.85/{$>.50$}}
  \node[font=\fontsize{5.4}{6}\selectfont, text=mute] at (\x,2.74) {\t};

% highlight band for the lineage quotient
\draw[good, line width=0.6pt, rounded corners=1.5pt] (2.05,0.51) rectangle (8.32,0.85);

% ---------------- score cells (generated) ----------------
  \node[anchor=west, inner sep=0pt, text=ink, font=\fontsize{5.8}{6.6}\selectfont] at (2.12,2.38) {Vote};
  \node[tcell, fill=panelbg, font=\fontsize{5.8}{6.6}\selectfont] at (4.33,2.38) {.50};
  \node[tcell, fill=bad!24, font=\bfseries\fontsize{5.8}{6.6}\selectfont] at (5.21,2.38) {.33};
  \node[tcell, fill=bad!24, font=\bfseries\fontsize{5.8}{6.6}\selectfont] at (6.09,2.38) {.33};
  \node[tcell, fill=good!24, font=\bfseries\fontsize{5.8}{6.6}\selectfont] at (6.97,2.38) {.33};
  \node[tcell, fill=good!24, font=\fontsize{5.8}{6.6}\selectfont] at (7.85,2.38) {.67};
  \node[anchor=west, inner sep=0pt, text=ink, font=\fontsize{5.8}{6.6}\selectfont] at (2.12,2.04) {Occurrence};
  \node[tcell, fill=panelbg, font=\fontsize{5.8}{6.6}\selectfont] at (4.33,2.04) {.50};
  \node[tcell, fill=bad!24, font=\bfseries\fontsize{5.8}{6.6}\selectfont] at (5.21,2.04) {.42};
  \node[tcell, fill=bad!24, font=\bfseries\fontsize{5.8}{6.6}\selectfont] at (6.09,2.04) {.42};
  \node[tcell, fill=good!24, font=\bfseries\fontsize{5.8}{6.6}\selectfont] at (6.97,2.04) {.42};
  \node[tcell, fill=good!24, font=\fontsize{5.8}{6.6}\selectfont] at (7.85,2.04) {.58};
  \node[anchor=west, inner sep=0pt, text=ink, font=\fontsize{5.8}{6.6}\selectfont] at (2.12,1.7) {Max-QuAD};
  \node[tcell, fill=panelbg, font=\fontsize{5.8}{6.6}\selectfont] at (4.33,1.7) {.50};
  \node[tcell, fill=good!24, font=\fontsize{5.8}{6.6}\selectfont] at (5.21,1.7) {.50};
  \node[tcell, fill=good!24, font=\fontsize{5.8}{6.6}\selectfont] at (6.09,1.7) {.50};
  \node[tcell, fill=bad!24, font=\fontsize{5.8}{6.6}\selectfont] at (6.97,1.7) {.50};
  \node[tcell, fill=bad!24, font=\fontsize{5.8}{6.6}\selectfont] at (7.85,1.7) {.50};
  \node[anchor=west, inner sep=0pt, text=ink, font=\fontsize{5.8}{6.6}\selectfont] at (2.12,1.36) {Content $q_C$};
  \node[tcell, fill=panelbg, font=\fontsize{5.8}{6.6}\selectfont] at (4.33,1.36) {.50};
  \node[tcell, fill=good!24, font=\fontsize{5.8}{6.6}\selectfont] at (5.21,1.36) {.50};
  \node[tcell, fill=bad!24, font=\bfseries\fontsize{5.8}{6.6}\selectfont] at (6.09,1.36) {.42};
  \node[tcell, fill=bad!24, font=\fontsize{5.8}{6.6}\selectfont] at (6.97,1.36) {.50};
  \node[tcell, fill=bad!24, font=\fontsize{5.8}{6.6}\selectfont] at (7.85,1.36) {.50};
  \node[anchor=west, inner sep=0pt, text=ink, font=\fontsize{5.8}{6.6}\selectfont] at (2.12,1.02) {Content-keyed $q_K$};
  \node[tcell, fill=panelbg, font=\fontsize{5.8}{6.6}\selectfont] at (4.33,1.02) {.50};
  \node[tcell, fill=good!24, font=\fontsize{5.8}{6.6}\selectfont] at (5.21,1.02) {.50};
  \node[tcell, fill=bad!24, font=\bfseries\fontsize{5.8}{6.6}\selectfont] at (6.09,1.02) {.42};
  \node[tcell, fill=good!24, font=\bfseries\fontsize{5.8}{6.6}\selectfont] at (6.97,1.02) {.42};
  \node[tcell, fill=good!24, font=\fontsize{5.8}{6.6}\selectfont] at (7.85,1.02) {.58};
  \node[anchor=west, inner sep=0pt, text=good, font=\fontsize{5.8}{6.6}\selectfont] at (2.12,0.68) {Lineage $q_L$ \textbf{(ours)}};
  \node[tcell, fill=panelbg, font=\fontsize{5.8}{6.6}\selectfont] at (4.33,0.68) {.50};
  \node[tcell, fill=good!24, font=\fontsize{5.8}{6.6}\selectfont] at (5.21,0.68) {.50};
  \node[tcell, fill=good!24, font=\fontsize{5.8}{6.6}\selectfont] at (6.09,0.68) {.50};
  \node[tcell, fill=good!24, font=\bfseries\fontsize{5.8}{6.6}\selectfont] at (6.97,0.68) {.42};
  \node[tcell, fill=good!24, font=\fontsize{5.8}{6.6}\selectfont] at (7.85,0.68) {.58};

% ---------------- legend ----------------
\fill[good!24] (2.12,0.18) rectangle (2.3,0.3);
\node[small, anchor=west, inner sep=1pt] at (2.32,0.24) {meets target};
\fill[bad!24] (3.42,0.18) rectangle (3.6,0.3);
\node[small, anchor=west, inner sep=1pt] at (3.62,0.24) {violates target};
\node[small, anchor=west, inner sep=1pt] at (4.9,0.24) {\textbf{bold}: verdict rejects ($\sigma<\theta$)};
\end{tikzpicture}%
\endgroup